\newcommand{\labelword}[2]{%
  \phantomsection
  #1\def\@currentlabel{\unexpanded{#1}}\label{#2}%
}
\newcommand{\lipicsenumref}[1]{\textcolor{lipicsGray}{\sffamily\bfseries(\labelcref{#1})}}
\newcommand{\etal}{\textit{et~al.}\xspace}
\newcommand{\f}{Fr\'echet\xspace}
\newcommand{\dF}{\ensuremath{\overline{d}_\mathrm{F}}}
\newcommand{\bigO}{\ensuremath{\mathcal{O}}}
\newcommand{\from}{\colon}
\newcommand{\R}{\ensuremath{\mathbb{R}}}
\newcommand{\F}{\ensuremath{\mathcal{F}}}
\newcommand{\calP}{\ensuremath{\mathcal{P}}}
\newcommand{\calR}{\ensuremath{\mathcal{R}}}
\newcommand{\NN}{\ensuremath{\mathit{NN}}}
\newcommand{\Cost}{\ensuremath{\mathrm{Cost}}}
\title{A near-linear time exact algorithm for the $L_1$-geodesic Fr\'echet distance between two curves on the boundary of a simple polygon}
\titlerunning{$L_1$-geodesic Fr\'echet distance between curves on the boundary of a simple polygon}
\author
{Thijs van der Horst}
{Department of Information and Computing Sciences, Utrecht University, the Netherlands
\and
Department of Mathematics and Computer Science, TU Eindhoven, the Netherlands}
{t.w.j.vanderhorst@uu.nl}
{https://orcid.org/0009-0002-6987-4489}
{}
\author
{Marc van Kreveld}
{Department of Information and Computing Sciences, Utrecht University, The Netherlands}
{m.j.vankreveld@uu.nl}
{https://orcid.org/0000-0001-8208-3468}
{}
\author
{Tim Ophelders}
{Department of Information and Computing Sciences, Utrecht University, the Netherlands
\and
Department of Mathematics and Computer Science, TU Eindhoven, the Netherlands}
{t.a.e.ophelders@uu.nl}
{https://orcid.org/0000-0002-9570-024X}
{partially supported by the Dutch Research Council (NWO) under project no.\ VI.Veni.212.260.}
\author
{Bettina Speckmann}
{Department of Mathematics and Computer Science, TU Eindhoven, The Netherlands}
{b.speckmann@tue.nl}
{https://orcid.org/0000-0002-8514-7858}
{}
\authorrunning{T. van der Horst, M. van Kreveld, and T. Ophelders, and B. Speckmann}
\keywords{Fr\'echet distance, geodesic, simple polygon}
\begin{document}
\maketitle

\begin{abstract}
Let $P$ be a polygon with $k$ vertices. Let $R$ and $B$ be two simple, interior disjoint curves on the boundary of $P$, with $n$ and $m$ vertices. We show how to compute the \f distance between $R$ and $B$ using the geodesic $L_1$-distance in $P$ in $\bigO(k \log nm + (n+m) (\log^2 nm \log k + \log^4 nm))$ time.
\end{abstract}

\section{Introduction}

In a seminal paper from 1992, Alt and Godau~\cite{alt92measuring_resemblance} introduced the \emph{\f distance} to the algorithms community as a better similarity measure for two curves than, for example, the Hausdorff distance. For two polygonal curves with $n$ and $m$ vertices, they addressed the algorithmic problem of computing the \f distance and presented an $O(nm\log nm)$ time algorithm~\cite{alt95continuous_frechet}, which is near-quadratic in the input size.

Since then, research efforts have developed globally in two directions: examining variations or generalizations that can still be solved in near-quadratic time, and restrictions or approximations to beat the quadratic time upper bound.
In 2014, Bringmann~\cite{bringmann14hardness} showed that the \f distance cannot be computed to within a factor $1.001$ in strongly subquadratic time, unless the strong exponential time hypothesis fails.
This result was improved in 2019 by Buchin, Ophelders and Speckmann~\cite{buchin19seth_says}, who showed that this conditional lower bound applies even for curves in 1-dimensional space and for approximations better than a factor~3. On the other hand, Driemel, Har-Peled and Wenk~\cite{driemel12realistic} showed that a near-linear time $(1+\varepsilon)$-approximation exists for curves in the plane that satisfy a sparseness condition. A full overview of the results is beyond the scope of this paper.

One of the variations that has been examined concerns computing the \f distance between two curves in a polygonal environment. Intuitively, the \f distance is based on matching points on the two curves in a continuous and forward manner, and considering the Euclidean distance between matched points. In a polygonal environment, the Euclidean distance is replaced by the Euclidean \emph{geodesic} distance, the length of a shortest path in the environment.
In the setting of two curves inside a simple polygon, Cook and Wenk~\cite{cook10geodesic_frechet} showed that the \f distance can still be computed in near-quadratic time.
Recently, it was shown that if the two curves lie on the boundary of a simple polygon, then a $(1+\varepsilon)$-approximation exists that uses near-linear time~\cite{vanderhorst25geodesic_frechet}. In this paper we show that in the same setting, but using the $L_1$-distance to measure lengths of geodesics rather than the Euclidean distance, an exact algorithm can be given that requires near-linear time.

\subparagraph*{Preliminaries.}
    Let $P$ be a simple polygon.
    A \emph{curve} on the boundary of $P$ is a piecewise-linear function $F \from [0, 1] \to \partial P$ connecting a sequence of \emph{vertices}.
    Consecutive vertices are connected by a straight-line \emph{edge}.
    The curve is \emph{simple} if the preimage of a point on $\partial P$ is at most an interval.
    We denote by $F[x, x']$ the subcurve of $F$ over the domain $[x, x']$.
    We write $|F|$ to denote the number of vertices of $F$.
    
    A \emph{reparameterization} of $[0, 1]$ is a non-decreasing surjection $f \from [0, 1] \to [0, 1]$.
    Two reparameterizations $f$ and $g$ describe a \emph{matching} $(f, g)$ between two curves $R$ (red) and $B$ (blue), where for every $t \in [0, 1]$ the point $R(f(t))$ is matched to $B(g(t))$.
    The matching $(f, g)$ is said to have \emph{cost}
    \[
        \max_t~\overline{d}( R(f(t)), B(g(t)) ),
    \]
    where in this work, $\overline{d}(p, q)$ measures the length of a shortest path in $P$ from $p$ to $q$ under the $L_1$-norm.
    We call such a path an \emph{$L_1$-geodesic}.
    There are potentially uncountably many $L_1$-geodesics from $p$ to $q$.
    On the other hand, under the $L_2$-norm, $P$ contains a unique shortest path from $p$ to $q$, which we denote by $\pi(p,q)$.
    It turns out that $\pi(p, q)$ is also an $L_1$-geodesic, so we use it as a representative.

    A matching with cost at most $\delta$ is called a \emph{$\delta$-matching}.
    The (continuous) \emph{$L_1$-geodesic \f distance} $\dF(R, B)$ between $R$ and $B$ is the minimum cost over all matchings.
    The corresponding matching is a \emph{\f matching}.

    The \emph{parameter space} of $R$ and $B$ is the axis-aligned rectangle $[0, 1]^2$.
    Any point $(x, y)$ in the parameter space corresponds to the pair of points $R(x)$ and $B(y)$ on the two curves.
    A point $(x, y)$ in the parameter space is \emph{$\delta$-close} for some $\delta \geq 0$ if $\overline{d}(R(x), B(y)) \leq \delta$.
    The \emph{$\delta$-free space} $\F_\delta(R, B)$ of $R$ and $B$ is the subset of $[0, 1]^2$ containing all $\delta$-close points.
    A point $q = (x', y') \in \F_\delta(R, B)$ is \emph{$\delta$-reachable} from a point $p = (x, y)$ if $x \leq x'$ and $y \leq y'$, and there exists a bimonotone path in $\F_\delta(R, B)$ from $p$ to $q$.
    Alt and Godau~\cite{alt95continuous_frechet} observe that there is a one-to-one correspondence between $\delta$-matchings between $R[x, x']$ and $B[y, y']$, and bimonotone paths from $p$ to $q$ through $\F_\delta(R, B)$.
    We abuse terminology slightly and refer to such paths as $\delta$-matchings.

\subparagraph*{Problem statement and results.}
    Our input consists of a simple polygon $P$ with $k$ vertices and curves $R, B \from [0, 1] \to \partial P$ with $n$ and $m$ vertices respectively, on the boundary of $P$.
    We restrict ourselves to inputs of the following type:
    \begin{enumerate}[(i)]
        \item\label{type_1} $R$ and $B$ are oppositely oriented, interior-disjoint, simple curves on the boundary of~$P$.
    \end{enumerate}
    We present an algorithm for computing the \f distance $\dF(R, B)$ between $R$ and $B$, when distances between points are measured as the minimum length of a path in $P$, under the $L_1$-norm.
    For our algorithm, we make two assumptions.
    Firstly, without loss of generality, assume that $R$ is oriented clockwise with respect to $P$, and $B$ counter-clockwise.
    Secondly, we assume general position: no edge of $P$ has slope $\pm 1$, and no two vertices of~$P$ have the same $y$-coordinate.
    We compute the \f distance between $R$ and $B$ under the geodesic $L_1$-distance in $P$ in $\bigO(k \log nm + (n+m) (\log^2 nm \log k + \log^4 nm))$ time.

\subparagraph*{Algorithmic outline.}
    We first outline the \emph{decision algorithm}:
    For an additional parameter $\delta \geq 0$, it reports whether $\dF(R, B) \leq \delta$.
    Recall that a $\delta$-matching between $R$ and $B$ corresponds to a bimonotone path in $\F_\delta(R, B)$ from $(0, 0)$ to $(1, 1)$.
    We use divide-and-conquer to decide whether such a path exists.

    \begin{figure}
        \centering
        \includegraphics{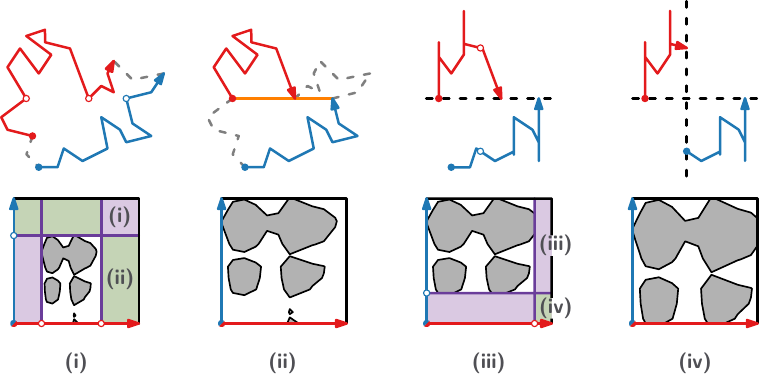}
        \caption{The four types of subproblems that arise in our divide-and-conquer scheme.
        The bottom row shows the parameter spaces corresponding to the top row.
        Between problem types, the correspondence between some subcurves and parts of the $\delta$-free space is illustrated.}
        \label{fig:partition_pipeline}
    \end{figure}
    
    We first explain how to divide the problem of type \lipicsenumref{type_1} into simpler subproblems of four types \lipicsenumref{type_1}--\lipicsenumref{type_4}, illustrated in~\cref{fig:partition_pipeline}.
    In~\cref{sub:partition_general_case} we find a horizontal chord $e^*$ of $P$ that intersects both $R$ and $B$ in at most a constant number of points.
    We split $R$ and $B$ at these points, resulting in subcurves $R_i$ of $R$ and $B_j$ of $B$.
    Each pair $(R_i, B_j)$ corresponds to a simpler subproblem.
    Such a subproblem is either of type \lipicsenumref{type_1} where $R_i$ and $B_j$ have at most a constant fraction of the vertices of $R$ and~$B$, or of a new type with extra structure:
    \begin{enumerate}[(i)]
        \setcounter{enumi}{1}
        \item\label{type_2} $R_i$ and $B_j$ are separated by a horizontal chord of $P$, namely $e^*$.
    \end{enumerate}
    We handle subproblems of type \lipicsenumref{type_1} recursively.

    We handle subproblems of type \lipicsenumref{type_2} in~\cref{sub:separated_polygon}.
    Suppose that $R$ and $B$ are separated by a horizontal chord $e^*$ of $P$, such as the subproblems of type \lipicsenumref{type_2}.
    For a point $p \in P$, denote by $\NN(p)$ the point on $e^*$ closest to $p$ under the $\overline{d}$ metric.
    We leverage the property that for all $r \in R$ and $b \in B$ there exists an $L_1$-geodesic from $r$ to $b$ that goes via $\NN(r)$ and $\NN(b)$.
    Using this property, we transform the problem $(R, B)$ of type \lipicsenumref{type_2} into an equivalent problem of a new type that is no longer constrained by the polygon:
    \begin{enumerate}[(i)]
        \setcounter{enumi}{2}
        \item\label{type_3} a pair of $x$-monotone\footnote{%
            Throughout this work, we consider monotonicity in the weak sense.
            That is, a curve is $x$-monotone if every vertical line intersects the curve in at most one subcurve.
        }
        curves $(\bar{R}, \bar{B})$ that are separated by a horizontal line.
    \end{enumerate}

    We handle subproblems of type \lipicsenumref{type_3} in~\cref{sub:x-monotone}.
    Suppose that $R$ and $B$ are $x$-monotone curves separated by a horizontal line, such as the subproblems of type \lipicsenumref{type_3}.
    We again split the problem into simpler subproblems.
    We find a vertical line $\ell$ that has a constant fraction of the vertices of $R \cup B$ on either side.
    We split $R$ and $B$ at points on $\ell$, resulting in subcurves $R_1$ and $R_2$ of $R$, and $B_1$ and $B_2$ of $B$.
    Each pair $(R_i, B_j)$ corresponds to a simpler subproblem.
    Such a subproblem is either of type \lipicsenumref{type_3} and handled recursively, or of a new type:
    \begin{enumerate}[(i)]
        \setcounter{enumi}{3}
        \item\label{type_4} $R_i$ and $B_j$ are $x$-monotone curves separated by both a horizontal and a vertical line.
    \end{enumerate}

    The above scheme splits our original problem into several subproblems of type \lipicsenumref{type_4}, each of which corresponds to an axis-aligned rectangle in the parameter space of the two input curves.
    Together, these rectangles partition the parameter space into interior-disjoint regions.
    Recall that we want to decide whether there exists a bimonotone path from $(0, 0)$ to $(1, 1)$ in the parameter space that lies completely in the $\delta$-free space.
    To do so, we maintain a set of points in the $\delta$-free space that can be reached by a bimonotone path that starts at $(0, 0)$.
    If for a subproblem of type \lipicsenumref{type_4} we know for its corresponding rectangle a set of reachable ``entry'' points on the bottom and left sides, then we can compute a sufficient set of ``exit'' points on the top and right sides that are reachable via at least one entry point.
    See~\cref{sec:doubly_separated} for details.
    By processing the subproblems of type \lipicsenumref{type_4} in the correct order, we compute whether a bimonotone path from $(0, 0)$ to $(1, 1)$ through the $\delta$-free space exists.

    In~\cref{sub:optimization_algorithm}, we use our decision algorithm to compute $\dF(R, B)$ exactly, by binary searching over a set of $\tilde{\bigO}((n+m)^2)$ candidate values and querying the decision algorithm at every step.
    Computing all candidate values explicitly would take too much time.
    Instead, we represent them implicitly as the Cartesian sum of two sorted sets $X$ and $Y$ of only $\bigO((n+m) \log^2 nm)$ values.
    Using algorithms for linear-time selection in such implicitly represented Cartesian sums~\cite{Frederickson84selection_XY,mirzaian85selection_XY}, the time taken to compute a candidate for $\delta$ is dominated by the time taken to perform the decision algorithm.
    This allows us to compute $\dF(R, B)$ with only logarithmic overhead compared to the decision algorithm.

    We present our decision algorithm bottom-up, starting with the most specific problem \lipicsenumref{type_4} that arises, and ending with the general problem \lipicsenumref{type_1}, in the following sections.

\section{Doubly-separated $x$-monotone curves}
\label{sec:doubly_separated}

    In this section, we consider subproblems of type \lipicsenumref{type_4}, where $R$ and $B$ are two $x$-monotone curves in $\R^2$ separated by both a horizontal line and a vertical line.
    Let $R$ have $n$ vertices and $B$ have $m$ vertices.
    We present an algorithm for \emph{propagating reachability information} through the $\delta$-free space of $R$ and $B$.
    We are given a parameter $\delta \geq 0$, a set of $\bigO(n+m)$ points $S$ on the bottom and left sides of the parameter space, and a set of $\bigO(n+m)$ points $T$ on the top and right sides of the parameter space.
    The goal is to report the points in $T$ that are $\delta$-reachable from at least one point in $S$.
    See~\cref{fig:doubly_separated}.
    In this section, we measure distances under the $L_1$-norm~$\lVert \cdot \rVert_1$, in which shortest paths are not restricted by any polygon.

    \begin{figure}
        \centering
        \includegraphics[page=1]{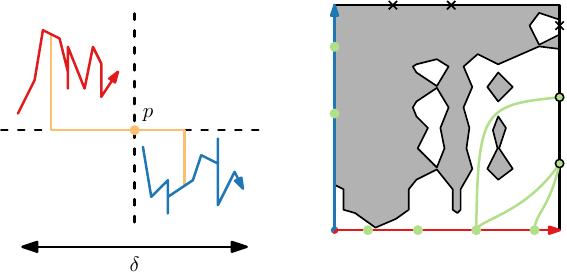}
        \caption{(left) A pair of doubly-separated $x$-monotone curves.
        We may consider the curves as lying on the union of two (degenerate) ``histogram'' polygons, in which case $\overline{d}$ is equal to the $L_1$-norm.
        Between any pair of red and blue points, there exists an $L_1$-geodesic through the point $p$.
        (right) Their $\delta$-free space diagram.
        For propagating reachability, given are a set of points $S$ (disks) on the bottom and left sides of the parameter space, and a set $T$ (circles and crosses) of points on the top and right sides.
        The filled circles are $\delta$-reachable, the crosses are not.}
        \label{fig:doubly_separated}
    \end{figure}

    \Cref{lem:transform_to_separated_1D} shows that the distances between $R$ and $B$ can be captured by two curves in one dimension that are separated by the origin, which can be constructed in linear time.

    \begin{lemma}
    \label{lem:transform_to_separated_1D}
        Let $R$ and $B$ be two doubly-separated curves in $\R^2$ with $n$ and $m$ vertices.
        In $\bigO(n+m)$ time, we can construct two curves $\bar{R}$ and $\bar{B}$ in $\R$ that are separated by the origin, such that $\F_\delta(R, B) = \F_\delta(\bar{R}, \bar{B})$ for all $\delta$.
        The curves $\bar{R}$ and $\bar{B}$ have $n$ and $m$ vertices as well.
    \end{lemma}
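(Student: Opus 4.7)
The plan is to exploit an additivity property of the $L_1$-norm at the crossing point $p$ of the separating horizontal and vertical lines. Since $R$ and $B$ are on opposite sides of both lines, they lie in diagonally opposite closed quadrants around $p$. Without loss of generality I will assume every $r \in R$ satisfies $r_x \leq p_x$ and $r_y \geq p_y$, while every $b \in B$ satisfies $b_x \geq p_x$ and $b_y \leq p_y$; the other diagonal case is symmetric. Expanding the norm term by term then gives
\[
    \lVert r - b \rVert_1 = (p_x - r_x) + (b_x - p_x) + (r_y - p_y) + (p_y - b_y) = \lVert r - p \rVert_1 + \lVert p - b \rVert_1
\]
for every such pair, which is exactly the geometric observation illustrated in \cref{fig:doubly_separated} that $p$ lies on an $L_1$-geodesic between $r$ and $b$.

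Using this identity, I would define the one-dimensional curves by sending each side to its own half-line through the origin:
\[
    \bar{R}(t) = -\lVert R(t) - p \rVert_1, \qquad \bar{B}(t) = \lVert B(t) - p \rVert_1.
\]
Then $\bar{R}(t) \leq 0 \leq \bar{B}(t)$, so $\bar{R}$ and $\bar{B}$ are separated by the origin. The additivity identity gives $|\bar{R}(x) - \bar{B}(y)| = \lVert R(x) - p \rVert_1 + \lVert B(y) - p \rVert_1 = \lVert R(x) - B(y) \rVert_1$ for every $(x,y) \in [0,1]^2$, so a point is $\delta$-close under the $L_1$-norm with respect to $(R, B)$ if and only if it is $\delta$-close with respect to $(\bar{R}, \bar{B})$. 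Hence $\F_\delta(R, B) = \F_\delta(\bar{R}, \bar{B})$ for every $\delta \geq 0$.

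It remains to verify the vertex count and running time. On each edge of $R$, neither coordinate of $R(t) - p$ changes sign, so the map $t \mapsto \lVert R(t) - p \rVert_1$ is affine along that edge; hence $\bar{R}$ is piecewise linear with breakpoints exactly at the $n$ vertices of $R$, and analogously $\bar{B}$ has $m$ vertices. Evaluating the formula at each input vertex takes constant time, so $\bar{R}$ and $\bar{B}$ can be written down in $\bigO(n+m)$ time.

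I do not foresee a substantial obstacle. The main conceptual step is spotting the $L_1$ additivity at $p$; once that is in hand, the transformation is just bookkeeping. The only tiny subtlety is the case in which a vertex of $R$ or $B$ lies exactly on one of the separating lines, but the closed-quadrant form of the additivity identity covers it without change, and the WLOG on the orientation of the quadrants is handled by flipping signs in the definitions of $\bar{R}$ and $\bar{B}$.
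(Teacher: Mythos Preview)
Your proof is correct and follows essentially the same route as the paper: define $\bar{R}(x) = -\lVert R(x) - p\rVert_1$ and $\bar{B}(y) = \lVert B(y) - p\rVert_1$ for the crossing point $p$ of the two separating lines, and use the $L_1$-additivity $\lVert R(x)-B(y)\rVert_1 = \lVert R(x)-p\rVert_1 + \lVert p-B(y)\rVert_1$. The only detail the paper adds that you leave implicit is how to locate $p$ in $\bigO(n+m)$ time, namely by computing the axis-aligned bounding boxes of $R$ and $B$.
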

    \begin{proof}
        Let $p$ be a point such that the horizontal and vertical lines through $p$ both separate $R$ from $B$.
        Between each pair of points $R(x)$ and $B(y)$, there exists a shortest path (under the $L_1$-norm) through $p$, so $\lVert R(x) - B(y) \rVert_1 = \lVert R(x) - p \rVert_1 + \lVert p - B(y) \rVert_1$.
        Let $\bar{R}, \bar{B} \from [0, 1] \to \R$ where $\bar{R}(x) = -\lVert R(x) - p \rVert_1$ and $\bar{B}(y) = \lVert B(y) - p \rVert_1$ (note the difference in sign).
        This definition lets $p$ correspond to $0$ in one-dimensional space, and places the curve $\bar{R}$ to its left and $\bar{B}$ to its right.
        For any $x,y$, we have $\lVert R(x) - B(y) \rVert_1 = |\bar{B}(y) - \bar{R}(x)|$, so $\F_\delta(R, B) = \F_\delta(\bar{R}, \bar{B})$ for all~$\delta$.
    
        We can compute a horizontal and vertical line separating $R$ and $B$ in $\bigO(n+m)$ time by determining the orthogonal bounding boxes of $R$ and $B$.
        From here we obtain $p$, and constructing $\bar{R}$ and $\bar{B}$ then takes a further $\bigO(n+m)$ time.
    \end{proof}

    Bringmann and K\"unnemann~\cite{bringmann17cpacked} and Van der Horst~\etal~\cite{vanderhorst25geodesic_frechet} both give near-linear time algorithms for propagating reachability information through the free space of two separated curves in one dimension.
    Combining this with~\cref{lem:transform_to_separated_1D}, we obtain:

    \begin{lemma}
    \label{lem:propagating_reachability_doubly_separated}
        Let $R$ and $B$ be two doubly-separated $x$-monotone curves in $\R^2$ with $n$ and $m$ vertices.
        Given a parameter $\delta \geq 0$ and sets $S$ and $T$ of $\bigO(n+m)$ points on the bottom and left, respectively top and right, sides of the parameter space of $R$ and $B$, we can report the points in $T$ that are $\delta$-reachable from at least one point in $S$ in $\bigO((n+m) \log nm)$ time.
    \end{lemma}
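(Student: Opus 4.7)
The plan is to reduce this two-dimensional reachability-propagation problem to a one-dimensional one via \cref{lem:transform_to_separated_1D}, and then invoke a known one-dimensional algorithm as a black box. First I would spend $\bigO(n+m)$ time to apply \cref{lem:transform_to_separated_1D}, producing curves $\bar R, \bar B \from [0,1] \to \R$ with $n$ and $m$ vertices that are separated by the origin and satisfy $\F_\delta(R, B) = \F_\delta(\bar R, \bar B)$ for all $\delta$. Because the two free-space diagrams agree as subsets of $[0,1]^2$, their bottom, left, top, and right sides are literally identified, so the input sets $S$ and $T$ describe the same reachability problem for $(\bar R, \bar B)$ as for $(R, B)$: a point of $T$ is $\delta$-reachable from some point of $S$ via a bimonotone path in $\F_\delta(R, B)$ if and only if it is $\delta$-reachable via such a path in $\F_\delta(\bar R, \bar B)$.

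Next I would invoke the near-linear-time one-dimensional reachability-propagation algorithm of Bringmann and K\"unnemann~\cite{bringmann17cpacked} (or equivalently the variant of Van der Horst~\etal~\cite{vanderhorst25geodesic_frechet}) on the reduced instance $(\bar R, \bar B, \delta, S, T)$. Those algorithms take a pair of separated one-dimensional curves with $n+m$ vertices and $\bigO(n+m)$ boundary source/query points, and report which queries are $\delta$-reachable from at least one source in $\bigO((n+m) \log nm)$ time. Adding the $\bigO(n+m)$ cost of the reduction yields the claimed running time.

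The only step that requires care is the interface with the black-box one-dimensional algorithm: I need to check that the cited algorithms accept sources on both the bottom and the left sides (and, symmetrically, queries on both the top and the right sides) of the parameter space, rather than only one side each. I expect this to be the case, but if not, splitting the call into the constant number of (source-side, query-side) combinations and taking the union of the reported reachable points only multiplies the running time by a constant. Beyond this bookkeeping verification, the mathematical content is entirely subsumed by \cref{lem:transform_to_separated_1D} and the cited one-dimensional algorithms, so I do not anticipate any further obstacles.
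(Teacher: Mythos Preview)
Your proposal is correct and follows essentially the same approach as the paper: the paper also derives \cref{lem:propagating_reachability_doubly_separated} by combining \cref{lem:transform_to_separated_1D} with the one-dimensional reachability-propagation algorithms of Bringmann and K\"unnemann~\cite{bringmann17cpacked} and Van der Horst~\etal~\cite{vanderhorst25geodesic_frechet}. Your extra remark about handling sources and queries on both sides of the parameter space is a reasonable bit of bookkeeping that the paper leaves implicit.
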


\section{Partitioning the parameter space}
\label{sec:partition}

    In this section, we consider the main problem, of type \lipicsenumref{type_1}, where $R$ and $B$ are simple, interior-disjoint curves on the boundary of a simple polygon $P$.
    We divide this problem into subproblems of type \lipicsenumref{type_4} by partitioning the parameter space of $R$ and $B$ into axis-aligned rectangles where either the corresponding subcurves are trivial (i.e., line segments), or form a type \lipicsenumref{type_4} instance, in that the subcurves are equivalent to a pair of doubly-separated $x$-monotone curves.
    
    An axis-aligned rectangle $[x, x'] \times [y, y']$ is \emph{doubly-separated} if, for all $\delta$, the $\delta$-free space $\F_\delta(R[x, x'], B[y, y'])$ inside it is equivalent to the $\delta$-free space of a pair of doubly-separated $x$-monotone curves $\bar{R}$ and $\bar{B}$, i.e., $\F_\delta(R[x, x'], B[y, y']) = \F_\delta(\bar{R}, \bar{B})$.
    A \emph{doubly-separated partition} of the parameter space of $R$ and $B$ is a partition into axis-aligned rectangles $[x, x'] \times [y, y']$, where either $[x, x'] \times [y, y']$ is doubly-separated, or both $R[x, x']$ and $B[y, y']$ are line segments.
    In the latter case, we refer to such a region as \emph{trivial}.
    A doubly-separated partition always exists, since we can take its regions to be only trivial regions, corresponding to the pairs of individual edges of the curves.
    Such a partition has high complexity however, having $nm$ regions.
    Our goal is instead to compute a doubly-separated partition with few trivial regions, and where the doubly-separated regions correspond to subcurves with relatively few vertices in total, so that the total input complexity when applying the result of~\cref{lem:propagating_reachability_doubly_separated} is kept low.

    To measure the ``quality'' of a doubly-separated partition, we define the following \emph{costs}.
    We define the cost $\Cost(\calR)$ of a trivial region to be $\bigO(1)$, and the cost of a doubly-separated region $\calR = [x, x'] \times [y, y']$ as $\big| R[x, x'] \big| + \big| B[y, y'] \big|$, the total number of vertices in the two subcurves.
    We define the cost $\Cost(\calP)$ of a doubly-separated partition $\calP$ as
    $\Cost(\calP) = \sum_{\calR \in \calP} \Cost(\calR)$.
    In this section we construct a doubly-separated partition of only near-linear cost, namely $\bigO((n+m) \log^2 nm)$.

    We construct our partition in three phases.
    In~\cref{sub:x-monotone} we construct a partition for subproblems of type \lipicsenumref{type_3}, where the curves $R$ and $B$ are $x$-monotone and separated by a horizontal line.
    Then, in~\cref{sub:separated_polygon}, we handle subproblems of type \lipicsenumref{type_2}, where $R$ and $B$ are simple curves on the boundary of a simple polygon $P$, separated by a horizontal line segment in $P$.
    Finally, in~\cref{sub:partition_general_case}, we handle subproblems of type \lipicsenumref{type_1}.

\subsection{Vertically-separated $x$-monotone curves}
\label{sub:x-monotone}

    In this section, we consider the setting where $R$ and $B$ are $x$-monotone curves in $\R^2$, separated by a horizontal line.
    We give a simple partition algorithm for the parameter space of $R$ and $B$ that yields a doubly-separated partition of cost $\bigO((n+m) \log nm)$:

    \begin{figure}
        \centering
        \includegraphics[page=2]{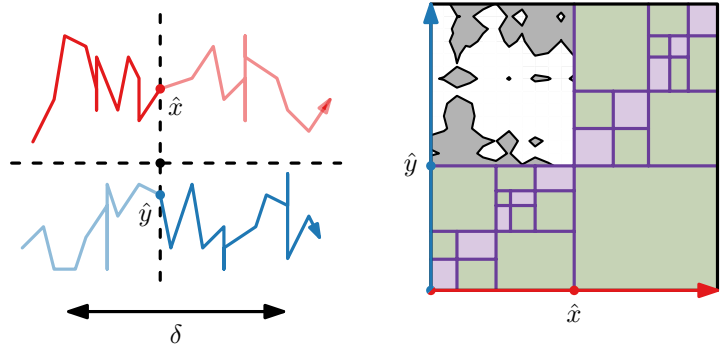}
        \caption{(left) A pair of $x$-monotone curves separated by a horizontal line, together with a vertical line with roughly half of all vertices on either side.
        (right) This line represents a partition of the parameter space into four regions, of which the top-left and bottom-right regions correspond to pairs of doubly-separated subcurves.
        The bottom-left and top-right regions are partitioned further.
        Green regions signify subproblems of type~\lipicsenumref{type_4}.}
        \label{fig:x-monotone}
    \end{figure}

    \begin{lemma}
    \label{lem:partition_x-monotone}
        Let $R$ and $B$ be two vertically-separated $x$-monotone curves in $\R^2$ with $n$ and $m$ vertices.
        We can compute a doubly-separated partition of the parameter space of $R$ and $B$ that has cost $\bigO((n+m) \log nm)$ in $\bigO((n+m) \log nm)$ time.
    \end{lemma}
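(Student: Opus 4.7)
The plan is to recursively split the parameter space using vertical lines, as illustrated in~\cref{fig:x-monotone}. At each recursive step we find a vertical line $\ell$ such that roughly half of the vertices of $R$ and $B$ lie on each side, obtainable in $\bigO(n+m)$ time by a median computation on the $x$-coordinates of the vertices. Since $R$ and $B$ are weakly $x$-monotone, $\ell$ intersects each curve in either a single point or a single vertical sub-edge; we pick one representative intersection point per curve and split $R$ into subcurves $R_1$ (left of $\ell$) and $R_2$ (right of $\ell$), and $B$ into $B_1, B_2$ analogously, taking a subcurve to be empty if $\ell$ misses the corresponding curve.

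The line $\ell$ partitions the current rectangle in parameter space into at most four sub-rectangles $\calR_{ij}$, one per pair $(R_i, B_j)$. The two anti-diagonal pairs $(R_1, B_2)$ and $(R_2, B_1)$ consist of $x$-monotone curves that are separated both by the original horizontal line and by $\ell$, so each such $\calR_{ij}$ is doubly-separated (in the definition of doubly-separated we may simply take $\bar R = R_i$ and $\bar B = B_j$), and we emit these as leaves of the partition. The two diagonal pairs $(R_1, B_1)$ and $(R_2, B_2)$ again consist of $x$-monotone curves separated by the original horizontal line, so we recurse on each. The recursion halts when both curves of a pair are single edges, yielding a trivial region.

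For the cost analysis, let $T(N)$ be the cost of the partition on input of total complexity $N = n+m$. The two doubly-separated rectangles emitted at the top level contribute cost $|R_1| + |B_2| + |R_2| + |B_1| = N + \bigO(1)$, where the $\bigO(1)$ accounts for the at most two extra endpoint vertices introduced by the split. The two recursive subproblems have total complexity $N + \bigO(1)$, with each subproblem of complexity at most $\lceil N/2 \rceil + \bigO(1)$. This yields the recurrence $T(N) \leq 2T(N/2 + \bigO(1)) + \bigO(N)$, which solves to $T(N) = \bigO(N \log N)$. The running time satisfies the same recurrence, since the median computation, the curve splits, and the emission of the doubly-separated rectangles each take $\bigO(N)$ time per node of the recursion tree.

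The main technical wrinkle is ensuring that $\ell$ can always be chosen sufficiently balanced; this could fail if many vertices share a common $x$-coordinate, e.g.\ along a vertical sub-edge of $R$ or $B$. We resolve this by choosing $\ell$ at the weighted median $x$-coordinate and assigning vertices lying exactly on $\ell$ to whichever side would otherwise be smaller. Together with the general-position assumption of the paper, this guarantees that each subproblem shrinks by a constant factor, so the recursion depth remains $\bigO(\log N)$ and the claimed bounds follow.
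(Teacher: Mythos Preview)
Your proof is correct and follows essentially the same recursive vertical-line-splitting approach as the paper's proof; the paper finds the balancing line by walking along the (already $x$-sorted) curves rather than by a median computation, but this is an immaterial difference. One small note: the paper's general-position assumption concerns the polygon $P$, not the derived $x$-monotone curves of this lemma, so it does not help with many vertices sharing an $x$-coordinate---but your own tie-breaking fix already suffices without it, and the paper simply remarks that the argument extends to weakly $x$-monotone curves.
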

    \begin{proof}
        We recursively partition the parameter space as follows.
        If $|R| \leq 2$ and $|B| \leq 2$ then the parameter space itself is a doubly-separated partition with only a trivial region and cost $\bigO(1)$, so we stop partitioning further.
        Also, if $R$ and $B$ are already separated by a vertical line, then the parameter space is a doubly-separated region with cost $\bigO(n+m)$, and so we stop partitioning further as well.
        Otherwise, we pick a vertical line $\ell$ and points $R(\hat{x})$ and $B(\hat{y})$ on $\ell$, such that the total number of vertices of $R[0, \hat{x}] \cup B[0, \hat{y}]$ is at most $(n+m) / 2 + 1$, and the total number of vertices of $R[\hat{x}, 1] \cup B[\hat{y}, 1]$ is at most $(n+m) / 2 + 1$ as well.
        The line $\ell$ can be seen as partitioning the parameter space into four quadrants defined by the point $(\hat{x}, \hat{y})$.
        See~\cref{fig:x-monotone}.

        Let $\calP$ be the partition defined by the four quadrants.
        The top-left region in $\calP$ is the parameter space of $R[0, \hat{x}]$ and $B[\hat{y}, 1]$.
        Since these subcurves are separated by $\ell$, as well as by the horizontal line $\ell^*$ separating $R$ and $B$, they are doubly-separated.
        Symmetrically, the subcurves corresponding to the bottom-left region in $\calP$ are doubly-separated as well.
        The bottom-left and top-right quadrants are not doubly-separated; we recursively partition $\calP$ in these regions.

        The cost of $\calP$ is the sum of costs of the partitions in the four quadrants.
        The top-left and bottom-right quadrants are not recursively partitioned and have costs $\big| R[0, \hat{x}] \big| + \big| B[\hat{y}, 1] \big|$ and $\big| R[\hat{x}, 1] \big| + \big| B[0, \hat{y}] \big|$, respectively, totaling at most $n+m+4$.
        The bottom-left and top-right quadrants are partitioned further.
        
        Next we analyze the cost of $\calP$ through a recurrence.
        Let $C(n, m)$ denote the cost of the resulting doubly-separated partition when $R$ and $B$ have $n$ and $m$ vertices, respectively.
        That is, $\Cost(\calP) = C(n, m)$.
        This quantity follows the following recurrence:
        \[
            C(n, m) \leq \begin{cases}
                n + m & \text{if $n \leq 2$ and $m \leq 2$,}\\
                C(n_1, m_1) + C(n_2, m_2) + n+m+4 & \text{otherwise,}
            \end{cases}
        \]
        where $n_1+m_1 \leq (n+m) / 2+2$ and $n_2+m_2 \leq (n+m) / 2+2$.
        This recurrence solves to $\Cost(\calP) = C(n, m) = \bigO((n+m) \log nm)$.

        The line $\ell$ used to split $R$ and $B$ can be found in $\bigO(n+m)$ by walking along the curves, since they are $x$-monotone and thus the vertices are sorted by $x$-coordinate.
        Let $T(n, m)$ denote the time taken to compute the resulting doubly-separated partition when $R$ and $B$ have $n$ and $m$ vertices, respectively.
        This quantity follows the following recurrence:
        \[
            T(n, m) \leq \begin{cases}
                \bigO(1) & \text{if $n \leq 2$ and $m \leq 2$,}\\
                T(n_1, m_1) + T(n_2, m_2) + \bigO(n+m) & \text{otherwise,}
            \end{cases}
        \]
        where again $n_1+m_1 \leq (n+m) / 2+2$ and $n_2+m_2 \leq (n+m) / 2+2$.
        This recurrence solves to $T(n, m) = \bigO((n+m) \log nm)$.
        Thus we compute the partition $\calP$ in $\bigO((n+m) \log nm)$ time.
        Note that our proof easily extends to cases where $R$ and $B$ are not strictly $x$-monotone.
    \end{proof}
    
\subsection{Separated curves on a simple polygon}
\label{sub:separated_polygon}

    Next we reduce the following setting to that of two vertically-separated $x$-monotone curves:
    Let $P$ be a simple polygon, and let $R$ and $B$ be two simple curves on the boundary of $P$, separated by a horizontal line segment in $P$.
    That is, there exists a horizontal line segment $e^*$ in $P$, such that splitting $P$ at $e^*$ creates two subpolygons where $R$ and $B$ lie in different subpolygons.\footnote{%
        We consider $e^*$ to be part of both subpolygons.
    }
    Recall that we assume $R$ to be oriented clockwise with respect to $P$, and $B$ counter-clockwise.
    We construct a pair of vertically-separated $x$-monotone curves whose $\delta$-free space (under the $L_1$-norm) is identical to that of $R$ and $B$, for all $\delta$.

    Every $L_1$-geodesic between points $r \in R$ and $b \in B$ intersects $e^*$ in exactly one line segment.
    Moreover, there exists an $L_1$-geodesic that is composed of the following three parts: an $L_1$-geodesic between $r$ and its closest point $u$ on $e^*$, an $L_1$-geodesic between $b$ and its closest point $v$ on $e^*$, and the line segment $\overline{uv}$.
    For two vertically-separated $x$-monotone curves (as in~\cref{sub:x-monotone}), there exists a similar shortest path between two points, namely one that is composed of two vertical segments and a horizontal segment on the separator.
    We use this connection for our reduction, which we specify next.

    We refer to the ``projection'' of a point $p \in P$ onto $e^*$ (with slope other than $\pm 1$) as the point on $e^*$ closest to $p$, and we denote this projection by $\NN(p)$.
    Above we established that between any two points $r \in R$ and $b \in B$, there exists an $L_1$-geodesic that goes through the projections $\NN(r)$ and $\NN(b)$.
    We introduce two real-valued functions, which we use to define the coordinates of the $x$-monotone curves we reduce the problem to.

    We assume $e^*$ is parameterized over $[1, 2]$.
    The first function, $\varphi \from P \to [1, 2]$, indexes the projections on $e^*$ of the points in $P$.
    That is, $e^*(\varphi(p)) = \NN(p)$.
    The second function, $\psi \from P \to [1, 2]$, represents the distance function from points to their projections, so $\psi(p) = \overline{d}(p, \NN(p))$.
    
    We consider the curves $\bar{R} \from [0, 1] \to \R^2$ and $\bar{B} \from [0, 1] \to \R^2$ given by $\bar{R}(x) = (\varphi(R(x)),\linebreak \psi(R(x)))$ and $\bar{B}(y) = (\varphi(B(y)), -\psi(B(y)))$.
    Note that $\bar{d}(R(x), B(y)) = \lVert \bar{R}(x) - \bar{B}(y) \rVert_1$ for all $x, y \in [0, 1]^2$, and so $\F_\delta(R, B) = \F_\delta(\bar{R}, \bar{B})$ for all $\delta$.
    The curves are separated by the horizontal line $(-\infty, \infty) \times \{0\}$.
    Next we prove that the curves are $x$-monotone, making the algorithm developed in~\cref{sub:x-monotone} applicable to them:

    \begin{lemma}
    \label{lem:monotone_closest_points}
        The curves $\bar{R}$ and $\bar{B}$ are $x$-monotone.
    \end{lemma}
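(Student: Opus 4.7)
The plan is to reduce weak $x$-monotonicity of $\bar R$ to monotonicity of the scalar function $\varphi \circ R \from [0,1] \to [1,2]$, since by construction the $x$-coordinate of $\bar R(x)$ equals $\varphi(R(x))$; the preimage of any value under a monotone function is an interval, which is the weak notion of $x$-monotonicity fixed in the paper. The case of $\bar B$ is handled symmetrically, reflecting across $e^*$ and using the opposite orientation convention of $B$.

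To show that $\varphi \circ R$ is monotone, I would work inside the subpolygon $P_R \subseteq P$ on the side of $e^*$ containing $R$, using the canonical path $\pi(R(x),\NN(R(x)))$, which is an $L_1$-geodesic by the remark in the preliminaries. Each such path lies entirely in $P_R$ and touches $e^*$ only at its endpoint $\NN(R(x))$: otherwise a strictly shorter path would exist, contradicting minimality together with the general-position assumption that no edge of $P$ has slope $\pm 1$. These canonical geodesics also enjoy the standard non-crossing property of shortest paths in a simple polygon.

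The core argument is then a Jordan-curve one. Fix $x_1 < x_2$ in $[0,1]$ and consider the closed curve obtained by concatenating $R[x_1,x_2]$, the geodesic $\pi(R(x_2),\NN(R(x_2)))$, the sub-segment of $e^*$ between the two projections, and the reverse of $\pi(R(x_1),\NN(R(x_1)))$. This curve traces the boundary of a sub-region of $P_R$. If $\NN(R(x_2))$ were located on $e^*$ in the order \emph{opposite} to what the clockwise orientation of $R$ along $\partial P_R$ prescribes, the two canonical geodesics would be forced to cross transversally, contradicting non-crossing. Hence the projections respect the ordering along $R$, and $\varphi \circ R$ is monotone.

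The step I expect to be most delicate is dealing with non-uniqueness of $\NN$: on ridges of the $L_1$-shortest-path map of $P_R$ rooted at $e^*$, several points of $e^*$ are equally close. General position eliminates most coincidences, but $R$ may still cross such a ridge at isolated parameters. I would fix a canonical tiebreaker (for example, the endpoint of the $L_2$-geodesic $\pi$) and argue that any resulting discontinuity of $\varphi \circ R$ is a monotone jump at an isolated point. Interpreting each jump as a vertical segment inserted as an additional vertex of $\bar R$ preserves the piecewise-linearity of $\bar R$, and a vertical segment is itself weakly $x$-monotone, so the weak $x$-monotonicity of $\bar R$ is maintained.
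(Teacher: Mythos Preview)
Your approach is essentially the paper's: both reduce to monotonicity of $\varphi\circ R$, argue by contradiction that if the projections $\NN(R(x))$ and $\NN(R(x'))$ appear on $e^*$ in the wrong order then, by a Jordan/orientation argument in $P_R$, the two geodesics to $e^*$ must meet, and derive a contradiction from that meeting point. The paper makes the final step explicit and self-contained: at any common point $p$ of the two shortest paths, both $\NN(R(x))$ and $\NN(R(x'))$ are nearest points on $e^*$ to $p$, so they coincide by uniqueness of $\NN(p)$. Your appeal to a ``standard non-crossing property of shortest paths in a simple polygon'' is imprecise as stated---$L_2$-geodesics between four arbitrary endpoints can certainly cross---and the property you actually need (non-crossing of nearest-neighbor paths to a common target $e^*$) is exactly that one-line swap argument the paper gives. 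Finally, your careful treatment of possible non-uniqueness of $\NN$ goes beyond what the paper does (it simply asserts ``$\NN(p)$ is unique''); under the paper's general-position assumption this extra work is not required, though it does no harm.
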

    \begin{proof}
        We prove the statement for $\bar{R}$; it follows by symmetry that $\bar{B}$ is $x$-monotone.
        Let $0 \leq x \leq x' \leq 1$ be two values.
        We show that $\bar{R}(x)$ lies to the left of or on $\bar{R}(x')$.
        
        Let $r = R(x)$ and $r' = R(x')$.
        We have that $\bar{R}(x)$ lies to the left of $\bar{R}(x')$ precisely if $\NN(r)$ comes before $\NN(r')$ along $e^*$.
        Suppose for a contradiction that $\NN(r)$ comes strictly after $\NN(r')$ along $e^*$, for some points $r, r' \in R$ with $r$ before $r'$.
        Due to the orientations of $R$ and $e^*$ relative to each other, every path from $r$ to $\NN(r)$ intersects every path from $r'$ to $\NN(r')$.
        In particular, any pair of rectilinear shortest paths from $r$ and $r'$ to $\NN(r)$ and $\NN(r')$, respectively, intersect in some point $p$.
        Naturally, $\NN(p)$ must be equal to both $\NN(r)$ and $\NN(r')$.
        However, since $\NN(p)$ is unique, we obtain that $\NN(r) = \NN(r')$, giving a contradiction.
    \end{proof}

    We have shown that the algorithm developed in~\cref{sub:x-monotone} can be applied to the curves $\bar{R}$ and $\bar{B}$.
    Next we show that $\bar{R}$ and $\bar{B}$ have low complexity.
    Let $R$ have $n$ vertices and $B$ have $m$ vertices.
    We show that $\bar{R}$ has $\bigO(n)$ vertices and $\bar{B}$ has $\bigO(m)$ vertices.
    While doing so, we develop a data structure that can construct $\bar{R}$ and $\bar{B}$ efficiently.

    We construct pieces of $\bar{R}$ and $\bar{B}$ by computing the functions $\varphi$ and $\psi$ over individual edges of $R$ and $B$.
    When restricted to a single line segment $e$, the function $\psi$ is similar to the ``hourglass function'' $H_{e, e^*}$ introduced by Cook and Wenk~\cite{cook10geodesic_frechet}.
    The hourglass function uses the $L_2$-norm to measure lengths of shortest paths, and can have $\bigO(k)$ complexity~\cite{cook10geodesic_frechet}.
    Perhaps surprisingly however, we show that since we measure lengths with the $L_1$-norm, the function $\psi$ has only constant complexity when applied to a single edge.
    Moreover, the function $\varphi$ has constant complexity as well.

    \begin{figure}
        \centering
        \includegraphics{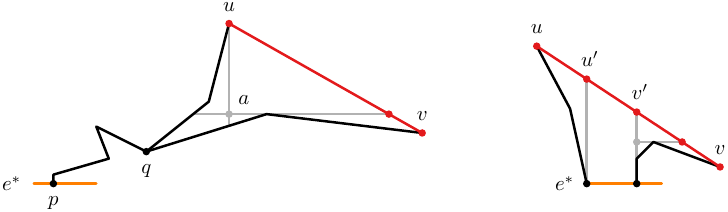}
        \caption{An illustration of the two settings discussed in~\cref{lem:distance_function_edge}.
        On the left is the case where $\NN(u) = \NN(v)$.
        On the right is the case where $\NN(u) \neq \NN(v)$.
        Marked points on $\overline{uv}$ are points where the functions $\varphi$ and $\psi$ change pieces.}
        \label{fig:distance_function_segment}
    \end{figure}

    \begin{lemma}
    \label{lem:distance_function_edge}
        Let $\overline{uv} \subseteq P$ be a line segment interior-disjoint from $e^*$.
        The functions $\varphi$ and $\psi$, restricted to $\overline{uv}$, are piecewise-linear with $\bigO(1)$ pieces.
        After preprocessing $P$ in $\bigO(k)$ time, the functions can be computed over $\overline{uv}$ in $\bigO(\log k)$ time.
    \end{lemma}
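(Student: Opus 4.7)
The plan exploits the decomposability of $L_1$-geodesics. Assume without loss of generality that $\overline{uv}$ lies above $e^*$; the other case is symmetric. Any $L_1$-geodesic from $p \in \overline{uv}$ to a point on $e^*$ can be chosen $y$-monotone, since a vertical backtrack only adds to the $L_1$-length. For a $y$-monotone path the $L_1$-length equals its total vertical motion plus its total horizontal motion, so $\psi(p) = (y_p - y_{e^*}) + h(p)$, where $h(p)$ is the minimum horizontal motion over $y$-monotone paths from $p$ to $e^*$ inside $P$.

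I would first characterize $h(p)$ via the downward vertical ray from $p$: if the vertical segment from $p$ to $(x_p, y_{e^*})$ lies in $P$, then $h(p) = 0$ and $\NN(p)$ is the direct vertical projection; otherwise the ray first hits a ``ceiling'' edge of $P$ at a point above $e^*$, and the optimum $y$-monotone path wraps around one of the two reflex vertices bounding that edge. Iterating this argument together with $L_1$-monotonicity, the optimum can always be written as $\psi(p) = \lVert p - r^*(p) \rVert_1 + d(r^*(p), e^*)$, where $r^*(p)$ is a distinguished ``innermost blocking'' reflex vertex and $d(r, e^*)$ is a precomputable constant depending only on $r$.

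Next, I would analyze $\varphi$ and $\psi$ along $\overline{uv}$. The identity of $r^*(p)$ changes only where $p$ crosses one of a constant number of separator lines: the vertical line through a candidate reflex vertex, or the ``balance'' line along which two candidates yield equal overall distance. A straight line segment crosses each such line at most once, so $\varphi$ and $\psi$ consist of $\bigO(1)$ linear pieces along $\overline{uv}$. Within each piece, $\psi$ has the form $\lVert p - r^*\rVert_1 + \mathrm{const}$ and $\varphi$ is a linear reprojection determined by the fixed wrap structure.

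For the data structure, I would preprocess $P$ in $\bigO(k)$ time by constructing the $L_1$-shortest-path map of $e^*$ (of linear complexity) together with a point-location structure, and by tabulating $d(r, e^*)$ at every polygon vertex $r$. A query on $\overline{uv}$ then locates its endpoints to determine $r^*(u)$, $r^*(v)$ and the relevant separator lines in $\bigO(\log k)$ time, and evaluates the resulting $\bigO(1)$ linear pieces. The main obstacle is proving the $\bigO(1)$-piece bound; this is the surprising strength of the $L_1$ setting alluded to above the lemma, since the analogous $L_2$ hourglass function of Cook and Wenk~\cite{cook10geodesic_frechet} can have $\Omega(k)$ complexity on a single edge. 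The improvement comes from the fact that $L_1$-horizontal motion is controlled by a single innermost blocking reflex vertex rather than by a long geodesic chain.
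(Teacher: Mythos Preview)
Your proposal has the right intuition about $L_1$-decomposability but leaves the decisive step unproven. You assert that the ``innermost blocking reflex vertex'' $r^*(p)$ takes only $\bigO(1)$ distinct values along $\overline{uv}$, and that locating $r^*(u)$ and $r^*(v)$ suffices to recover all breakpoints. Neither claim is justified. If $r^*(p)$ is the last polygon vertex on the geodesic $\pi(p,\NN(p))$---the natural reading of ``blocking reflex vertex''---then it can change $\Theta(k)$ times even on a single segment: when $\NN(u)=\NN(v)$, the two geodesics $\pi(\NN(u),u)$ and $\pi(\NN(u),v)$ bound a funnel whose sides may carry $\Theta(k)$ reflex vertices, and as $w$ sweeps $\overline{uv}$ the anchor of $w$ visits each in turn. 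Point-locating only $u$ and $v$ in an $L_1$-SPM cannot detect these intermediate changes. You flag this as ``the main obstacle'' but then restate the conclusion (``controlled by a single innermost blocking reflex vertex'') rather than prove it.

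The paper's argument is structurally different and is what actually delivers the $\bigO(1)$ bound. It first computes $\NN(u)$ and $\NN(v)$ and splits cases. When they differ, vertical visibility gives points $u',v'\in\overline{uv}$ with $\varphi,\psi$ linear on $\overline{u'v'}$, reducing $\overline{uu'}$ and $\overline{v'v}$ to the equal case. When $\NN(u)=\NN(v)=p$, the paper uses the inward-convexity of the funnel sides $\pi(p,u),\pi(p,v)$ to find the unique horizontal and vertical tangent chords of the funnel; their intersection is a single point $a$---generally \emph{not} a polygon vertex---through which every $w\in\overline{uv}$ admits an $L_1$-geodesic to $p$, yielding $\psi(w)=\lVert w-a\rVert_1+\overline{d}(a,p)$ and hence at most three pieces. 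The $\bigO(1)$ bound thus comes from collapsing an arbitrarily long reflex chain into one apex via convex tangents, not from bounding how many reflex vertices are ``candidates''. Correspondingly, the query algorithm extracts $\pi(p,u)$ and $\pi(p,v)$ as balanced trees via Guibas--Hershberger and binary-searches each for its horizontal and vertical tangent in $\bigO(\log k)$ time, then evaluates $\overline{d}(a,p)$ with Bae--Wang; your SPM-plus-point-location sketch does not provide an analogous mechanism to find the breakpoints.
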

    \begin{proof}
        Consider the points $\NN(u)$ and $\NN(v)$.
        We distinguish between the cases where $\NN(u) = \NN(v)$ and $\NN(u) \neq \NN(v)$, proving the complexity bounds and constructing a data structure for each case.
        Computing $\NN(u)$ and $\NN(v)$, and thus determining in which case we are, takes $\bigO(\log k)$ time with the data structure of Cook and Wenk~\cite{cook10geodesic_frechet}, after $\bigO(k)$-time preprocessing.
        For the remainder of this proof, including for the query algorithms of the data structures, we assume these points are known.
        See~\cref{fig:distance_function_segment} for an illustration of the two cases and their proofs.

    \proofsubparagraph{Complexity bound when $\NN(u) = \NN(v)$.}
        We have that $\NN(w) = \NN(u)$ for all $w \in \overline{uv}$.
        Thus $\varphi$ is a constant function when restricted to $\overline{uv}$.
        
        Let $p = \NN(u)$.
        We prove the existence of a point $a \in P$ where for all $w \in \overline{uv}$, $\psi(w) = \lVert w - a \rVert_1 + \overline{d}(a, p)$.
        Since $\lVert w - a \rVert_1$ is a piecewise-linear function with $\bigO(1)$ pieces as we vary $w$ from $u$ to $v$, this point $a$ proves that $\psi$, restricted to $\overline{uv}$, is piecewise-linear with $\bigO(1)$ pieces.
        
        The geodesics $\pi(p, u)$ and $\pi(p, v)$ share a maximal subpath up to some point $q$ (possibly $p$ itself), after which the paths diverge.
        We have $\overline{d}(p, w) = \overline{d}(p, q) + \overline{d}(q, w)$ for all $w \in \overline{uv}$.
        If $q \in \overline{uv}$, then we set $a \gets q$.
        Since the geodesic (under the $L_2$-norm) $\pi(w, a)$ is a line segment for every $w \in \overline{uv}$, we have that $\overline{d}(w, a) = \lVert w - a \rVert_1$, from which it follows that $\psi(w) = \lVert w - a \rVert_1 + \overline{d}(a, p)$.
        
        If $q \notin \overline{uv}$, we consider the simple subpolygon $P'$ bounded by $\pi(q, u)$, $\pi(q, v)$ and $\overline{uv}$.
        The geodesics $\pi(q, u)$ and $\pi(q, v)$ are \emph{inward-convex}~\cite{lee84shortest_paths_rectilinear_barriers}, meaning the interior of every line segment connecting two points on one of the geodesics lies outside of $P'$.
        This implies that the total angle the geodesics turn is strictly smaller than $\pi$.
        Hence there are exactly one maximal horizontal segment $e_H$ and one maximal vertical segment $e_V$ in $P'$ that are incident to $\overline{uv}$ and one of $\pi(q, u)$ and $\pi(q, v)$, while being tangent to the other.
        See~\cref{fig:distance_function_segment}.
        Furthermore, $e_H$ and $e_V$ must intersect in a point, which we set $a$ to.
        Observe that $a$ as chosen meets our requirements.
        Indeed, for every point $w \in \overline{uv}$, there exists an $L_1$-geodesic to $q$ that goes through $a$.
        Additionally, it is bimonotone up to $a$, so its length is $\lVert w - a \rVert_1 + \overline{d}(a, q)$.
        It follows that $\psi(w) = \lVert w - a \rVert_1 + \overline{d}(a, p)$ and has up to three linear pieces for $w \in \overline{uv}$.

    \proofsubparagraph{Data structure when $\NN(u) = \NN(v)$.}
        We preprocess $P$ into two data structures.
        The first is by Bae and Wang~\cite{bae19L_1_shortest_paths}, which, after $\bigO(k)$-time preprocessing, reports the $L_1$-distance between two query points in $P$ in $\bigO(\log k)$ time.
        The second is by Guibas and Hershberger~\cite{guibas89shortest_paths}.
        After $\bigO(k)$-time preprocessing, this data structure allows us to extract $\pi(p, u)$ (or $\pi(p, v)$), represented by a balanced binary search tree $T_u$ (resp.\ $T_v$) storing the edges in order along the path, in $\bigO(\log k)$ time.
        
        By performing two binary searches, one over $T_u$ and one over $T_v$, we can find the horizontal and vertical tangents of $\pi(p, u)$ and $\pi(p, v)$, if they exist, in $\bigO(\log k)$ time.
        From these tangents we determine $e_H$ and $e_V$ in $\bigO(1)$ additional time; if a tangent does not exist, then the respective segment is incident to either $u$ or $v$, and we also determine it in $\bigO(1)$ additional time.

        The intersection between $e_H$ and $e_V$ gives the point $a$.
        We determine $\overline{d}(a, p)$ in $\bigO(\log k)$ time with the data structure of~\cite{bae19L_1_shortest_paths}.
        From here, we determine the restriction of $\psi$ to $\overline{uv}$ in $\bigO(1)$ additional time, using that $\psi(w) = \lVert w - a \rVert_1 + \overline{d}(a, p)$ for all $w \in \overline{uv}$.
        
    \proofsubparagraph{Complexity bound when $\NN(u) \neq \NN(v)$.}
        If $\NN(u) \neq \NN(v)$, then the geodesics $\pi(u, \NN(u))$ and $\pi(v, \NN(v))$ are disjoint.
        Because the geodesics are again inward-convex~\cite{lee84shortest_paths_rectilinear_barriers}, there are points $u', v' \in \overline{uv}$ for which $\overline{u' \NN(u)}$ and $\overline{v' \NN(v)}$ are vertical segments inside $P$.
        Since these segments naturally make a right angle with $e^*$, we have $\NN(u') = \NN(u)$ and $\NN(v') = \NN(v)$.
        The complexity bounds for $\varphi$ and $\psi$ over the segments $\overline{uu'}$ and $\overline{v'v}$ follow from those in the case where $\NN(u) = \NN(v)$.
        For the segment $\overline{u'v'}$, we have that for all $w \in \overline{u'v'}$, the point $\NN(w)$ is the vertical projection of $w$ onto $e^*$.
        Thus over $\overline{u' v'}$, $\varphi$ is a linear function, and $\psi$ measures the length of the vertical segments from points on $\overline{u' v'}$ to $e^*$, making it also a linear function. One of the pieces $\overline{uu'}$ or $\overline{v'v}$ can have one more break, namely the one vertically closer to $e^*$. The break is at the lowest $y$-coordinate of a point on $\overline{uv}$ that can horizontally be connected to $\overline{u'\NN(u)}$ and $\overline{v'\NN(v)}$ while staying in $P$.
        It follows that $\varphi$ and $\psi$ are piecewise-linear over $\overline{uv}$ with $\bigO(1)$ pieces.

    \proofsubparagraph{Data structure when $\NN(u) \neq \NN(v)$.}
        When $\NN(u) \neq \NN(v)$, we can construct $\varphi$ and $\psi$ over $\overline{u'v'}$ in $\bigO(1)$ time, as we may disregard the polygon $P$ and consider only the $L_1$-norm when constructing the functions.
        The functions over the segments $\overline{uu'}$ and $\overline{v'v}$ take an additional $\bigO(\log k)$ time with the data structure for the first case, where $\NN(u) = \NN(v)$.
    \end{proof}

    By querying the data structure of~\cref{lem:distance_function_edge} with every edge of $R$ and $B$ individually and combining the results, we obtain the functions $\varphi$ and $\psi$ over $R$ and $B$ in $\bigO((n+m) \log k)$ time, after $\bigO(k)$-time preprocessing.
    From this, we extract $\bar{R}$ and $\bar{B}$ in $\bigO(n+m)$ additional time.
    The complexities of $\bar{R}$ and $\bar{B}$ are $\bigO(n)$ and $\bigO(m)$, respectively.
    Thus we obtain:

    \begin{lemma}
    \label{lem:transformation_to_x-monotone}
        Let $P$ be a simple polygon with $k$ vertices.
        Let $R$ and $B$ be two simple curves with $n$ and $m$ vertices on the boundary of $P$, separated by a horizontal line segment in $P$.
        After preprocessing $P$ in $\bigO(k)$ time, we can construct a pair of vertically-separated $x$-monotone curves $\bar{R}$ and $\bar{B}$ with $\F_\delta(R, B) = \F_\delta(\bar{R}, \bar{B})$ in $\bigO((n+m) \log k)$ time.
        The curves $\bar{R}$ and $\bar{B}$ have $\bigO(n)$ and $\bigO(m)$ vertices, respectively.
    \end{lemma}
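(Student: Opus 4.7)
The plan is to assemble this lemma directly from the pieces built up in the preceding discussion of this subsection. The definition $\bar{R}(x) = (\varphi(R(x)), \psi(R(x)))$, $\bar{B}(y) = (\varphi(B(y)), -\psi(B(y)))$ is already fixed, and the identity $\overline{d}(R(x), B(y)) = \lVert \bar{R}(x) - \bar{B}(y) \rVert_1$, which gives $\F_\delta(R, B) = \F_\delta(\bar{R}, \bar{B})$ for all $\delta$, has already been justified using the fact that every $L_1$-geodesic between points of $R$ and $B$ can be routed through $\NN(R(x))$ and $\NN(B(y))$ on $e^*$. Vertical separation of $\bar{R}$ and $\bar{B}$ by the horizontal line $\{y=0\}$ is immediate from the sign convention in the definitions, and $x$-monotonicity of both curves is exactly \cref{lem:monotone_closest_points}. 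So the only remaining work is the constructive, quantitative part.

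For that, I would first preprocess $P$ in $\bigO(k)$ time to set up the data structure of \cref{lem:distance_function_edge}, which also subsumes the preprocessing of Cook and Wenk, Bae and Wang, and Guibas and Hershberger that that lemma relies on. Then I would iterate over the edges of $R$ one by one. For each edge $\overline{uv}$ of $R$, a single query to \cref{lem:distance_function_edge} returns the restrictions of $\varphi$ and $\psi$ to $\overline{uv}$ as piecewise-linear functions with $\bigO(1)$ pieces, in $\bigO(\log k)$ time. Concatenating these restrictions along $R$, while dropping any coincident join breakpoints, yields the image of $R$ under the map $x \mapsto (\varphi(R(x)), \psi(R(x)))$ as a polygonal curve $\bar{R}$ with $\bigO(n)$ vertices. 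The same procedure applied to $B$, with the sign flip on the second coordinate, produces $\bar{B}$ with $\bigO(m)$ vertices.

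Summing over edges, the construction time is $\bigO((n+m) \log k)$ after the $\bigO(k)$ preprocessing, as claimed. The output curves are $x$-monotone and vertically separated, and by the $\F_\delta$-identity already established they represent the same free-space family as $(R,B)$. I do not expect a real obstacle here: the semantic content (the $L_1$-geodesic-through-$\NN$ property, $x$-monotonicity, and the edgewise complexity of $\varphi,\psi$) has all been discharged earlier, and what remains is a straightforward sweep over the edges of $R$ and $B$ combined with the per-edge query bound of \cref{lem:distance_function_edge}. The only minor point to be careful about is to verify that breakpoints contributed by adjacent edges do not blow up the complexity beyond $\bigO(n+m)$, which follows immediately because each of the $n + m$ edges contributes only $\bigO(1)$ breakpoints.
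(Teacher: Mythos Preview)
Your proposal is correct and mirrors the paper's own argument essentially verbatim: the lemma is stated in the paper as a direct consequence of querying the data structure of \cref{lem:distance_function_edge} on each edge of $R$ and $B$ after $\bigO(k)$ preprocessing, combined with the earlier observations about vertical separation, $x$-monotonicity (\cref{lem:monotone_closest_points}), and the $\F_\delta$-identity. There is no substantive difference between your write-up and the paper's.
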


    \begin{corollary}
    \label{cor:partition_separated_polygon}
        Let $P$ be a simple polygon with $k$ vertices.
        Let $R$ and $B$ be two simple curves with $n$ and $m$ vertices on the boundary of $P$, separated by a horizontal line segment in $P$.
        After preprocessing $P$ in $\bigO(k)$ time, we can compute a doubly-separated partition of the parameter space of $R$ and $B$ that has cost $\bigO((n+m) \log nm)$ in $\bigO((n+m) \log nm)$ time.
    \end{corollary}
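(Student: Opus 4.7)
The plan is to obtain the corollary by chaining~\cref{lem:transformation_to_x-monotone} and~\cref{lem:partition_x-monotone}. First, I would preprocess $P$ in $\bigO(k)$ time using~\cref{lem:transformation_to_x-monotone} and then construct, in $\bigO((n+m) \log k)$ time, a pair $(\bar{R},\bar{B})$ of vertically-separated $x$-monotone curves of size $\bigO(n)$ and $\bigO(m)$ whose $\delta$-free space coincides with $\F_\delta(R,B)$ for every $\delta$. Next, I would invoke~\cref{lem:partition_x-monotone} on $(\bar{R},\bar{B})$ to produce a doubly-separated partition $\calP$ of the parameter space with cost $\bigO((n+m) \log nm)$, in $\bigO((n+m) \log nm)$ time.

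Since $R$ and $\bar{R}$ (and similarly $B$ and $\bar{B}$) share the same parameter domain $[0,1]$, and the two pairs have identical $\delta$-free spaces for every $\delta$, any region that is doubly-separated for $(\bar{R},\bar{B})$ is also doubly-separated for $(R,B)$. Hence $\calP$ is a valid doubly-separated partition of the parameter space of $R$ and $B$, and its construction fits within the claimed time bound.

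The only nontrivial part is verifying that the cost of $\calP$, defined in terms of the vertex counts of $R$ and $B$ rather than those of $\bar{R}$ and $\bar{B}$, is still $\bigO((n+m) \log nm)$. The transformation of~\cref{lem:transformation_to_x-monotone} is parameter-preserving, and by~\cref{lem:distance_function_edge} it introduces only $\bigO(1)$ additional vertices per edge of $R$ or $B$. Thus for any parameter interval $[x,x']$ the counts $|R[x,x']|$ and $|\bar{R}[x,x']|$ agree up to a constant factor, and similarly for $B$ and $\bar{B}$. The cost of $\calP$ as a partition of the parameter space of $(R,B)$ is therefore within a constant factor of its cost as a partition of the parameter space of $(\bar{R},\bar{B})$, and the bound carries over. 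I expect this bookkeeping to be the only subtlety; everything else is a direct invocation of the preceding lemmas.
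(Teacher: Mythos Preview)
Your approach is exactly the intended one; the paper states the corollary without proof as the immediate composition of \cref{lem:transformation_to_x-monotone} and \cref{lem:partition_x-monotone}, and your cost-bookkeeping argument (using \cref{lem:distance_function_edge} to bound $|\bar R[x,x']|$ by $\bigO(|R[x,x']|)$ on every parameter interval) is the right way to transfer the cost bound from $(\bar R,\bar B)$ back to $(R,B)$.

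One small point you glide over: you correctly record the transformation time as $\bigO((n+m)\log k)$ but then assert it ``fits within the claimed time bound'' of $\bigO((n+m)\log nm)$. That only holds when $\log k = \bigO(\log nm)$, which is not assumed anywhere. The paper is equally silent on this, and the extra $\log k$ factor resurfaces explicitly in the downstream bounds of \cref{lem:partition_polygon_to_x-monotone} and \cref{lem:decision_algorithm}, so the slip is harmless for the overall result---but strictly speaking the corollary's stated running time should be $\bigO((n+m)(\log nm+\log k))$, and you should flag rather than absorb the discrepancy.
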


\subsection{The general case}
\label{sub:partition_general_case}

    In this section, we start out in the general setting, where $R$ and $B$ are simple, interior-disjoint curves on the boundary of a simple polygon $P$, and construct a doubly-separated partition of the parameter space of $R$ and $B$.

    We follow the approach of~\cref{sub:x-monotone}, in that we partition the parameter space recursively, though this time based on horizontal \emph{chords} of $P$.
    A chord is a maximal segment that does not go outside $P$. By our assumptions, a chord can have at most three points in common with the boundary of $P$. If possible, we will use bichromatic chords, which have a point in common with both curves.
    Each such chord $e^*$ splits each curve into at most three subcurves:
    If $e^*$ intersects $R$ only at an endpoint of $R$, the chord ``trivially'' splits $R$ into the curve $R$ itself.
    Otherwise, $R$ is split into the maximal prefix $R[0, x]$ whose intersection with $e^*$ is only the point $R(x)$, the maximal suffix $R[x', 1]$ whose intersection with $e^*$ is only the point $R(x')$, and the maximal subcurve $R[x, x']$ bounded by~$e^*$.
    See~\cref{fig:polygon_chords}.
    The split of $B$ induced by $e^*$ is defined analogously, but if $R$ is split into three subcurves, then $B$ can be split into at most two subcurves (and vice versa).
    A chord corresponds to a partition of the parameter space into the axis-aligned rectangles whose corresponding subcurves are induced by the splits of $R$ and $B$.
    Thus, a chord corresponds to a partition into at most six regions.

    \begin{figure}
        \centering
        \includegraphics{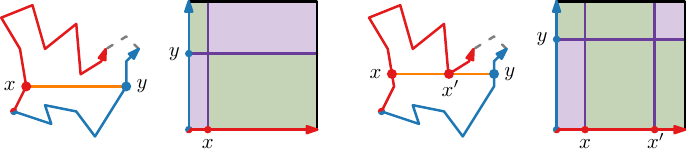}
        \caption{Two types of bichromatic chords and the partitions of the parameter space that they induce.
        All green regions correspond to separated subcurves; the purple regions are split recursively.}
        \label{fig:polygon_chords}
    \end{figure}

    If a bichromatic chord does not exist, then $R$ and $B$ can be separated trivially by a horizontal chord.
    Our partition algorithm works recursively, like the algorithm of~\cref{lem:partition_x-monotone}.
    To bound the recursive depth, we use a chord that splits $R$ and $B$ into at most two and three subcurves each, where the first and last subcurves of each have a total of at most $(n+m)/c$ vertices at either side of the chord, for some constant $c > 1$.
    
    We prove the existence of a horizontal chord with the desired properties, and give a construction algorithm.
    The result follows from a relaxed version of the ``polygon-cutting theorem'' of Chazelle~\cite{chazelle82polygon_cutting}.

    \begin{lemma}
    \label{lem:good_chord}
        If $R$ and $B$ are not already separated by a horizontal chord, there exists a horizontal chord that intersects both $R$ and $B$.
        Furthermore, if $n+m \geq 3$, such a chord exists that splits $R$ and $B$ into at most three subcurves each, where for each pair $(R_i, B_j)$ of subcurves not separated by the chord, $|R_i|+|B_j| \leq 2(n+m)/3$.
        Such a chord can be found in $\bigO(k)$ time.
    \end{lemma}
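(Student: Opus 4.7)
The plan is to work with the horizontal trapezoidal decomposition of $P$: horizontal chords through each vertex of $P$ partition $P$ into $\bigO(k)$ trapezoids, whose adjacency graph (two trapezoids adjacent iff they share a horizontal chord) is a tree $T$ because $P$ is simply connected. This decomposition, and hence $T$, can be constructed in $\bigO(k)$ time for a simple polygon. Every horizontal chord we need to consider corresponds to an edge of $T$ (any other horizontal chord can be continuously slid to such a representative without changing which arcs of $\partial P$ its endpoints lie on), so the task reduces to choosing an edge of $T$ with good properties---much in the spirit of Chazelle's polygon-cutting theorem, but restricted to horizontal cuts.

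For the first claim, I would argue the contrapositive: if no horizontal chord has one point on $R$ and another on $B$, then some chord separates $R$ from $B$. Classify each trapezoid as \emph{$R$-touching} if one of its non-horizontal sides is part of $R$, similarly \emph{$B$-touching}, and \emph{mixed} if both. A mixed trapezoid's top and bottom chords each have one endpoint on $R$ and one on $B$, so the no-bichromatic-chord assumption excludes mixed trapezoids. Since $R$ is a connected curve on $\partial P$, consecutive $R$-touching trapezoids are joined in $T$, so the $R$-touching trapezoids form a connected subtree $T_R$; similarly $T_B$, and $T_R \cap T_B = \emptyset$. Any chord on the unique tree path from $T_R$ to $T_B$ has both endpoints on arcs of $\partial P$ belonging to neither curve, so removing it separates $P$ into two pieces containing $R$ and $B$ respectively, contradicting the assumption.

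For the balance claim, I would weight each trapezoid by the number of vertices of $R \cup B$ on its non-horizontal boundary; the total weight is $n+m$. A standard tree centroid argument yields a chord $e^*$ whose removal splits $T$ into two components, each of weight at most $2(n+m)/3$. Consequently each pair of subcurves $(R_i, B_j)$ lying on the same side of $e^*$ satisfies $|R_i| + |B_j| \leq 2(n+m)/3$; and since a chord meets $\partial P$ in at most three points, $e^*$ splits $R$ and $B$ into at most three subcurves each. The main obstacle is ensuring that a balanced $e^*$ can also be chosen bichromatic. To address this, I would examine the chords along the path in $T$ between $T_R$ and $T_B$: the weight on one side varies monotonically along this path, so by an intermediate-value argument some $2/3$-balanced chord lies on it; such a chord is incident to a mixed trapezoid (or sits exactly at the interface between $T_R$ and $T_B$) and hence must be bichromatic.

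Finally, since the trapezoidal decomposition of a simple polygon takes $\bigO(k)$ time and tree centroid takes linear time in $|T| = \bigO(k)$, the whole construction runs in $\bigO(k)$ time.
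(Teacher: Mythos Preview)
Your framework is essentially the paper's: the horizontal trapezoidal decomposition and its dual tree are exactly what underlie Chazelle's relaxed polygon-cutting theorem, which the paper invokes to obtain a $2/3$-balanced horizontal chord. Your first-claim argument (no mixed trapezoids $\Rightarrow$ $T_R$, $T_B$ disjoint $\Rightarrow$ a separating chord) is fine, and your centroid step is equivalent to Chazelle's theorem.

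The genuine gap is in combining balance with bichromaticity. Your plan is to walk along ``the path in $T$ between $T_R$ and $T_B$'' and use an intermediate-value argument to find a $2/3$-balanced chord there, asserting that such a chord is adjacent to a mixed trapezoid and hence bichromatic. This does not work as stated. First, in the setting you care about ($R$ and $B$ \emph{not} separated), mixed trapezoids exist, so $T_R$ and $T_B$ overlap and ``the path between them'' is ill-defined. Second, even if you walk from a balanced but monochromatic centroid edge toward a mixed trapezoid, the monotone weight argument gives you no control over \emph{which} edge on the path is simultaneously balanced and bichromatic: the first bichromatic edge you meet may already have one side exceeding $2(n+m)/3$, and the last balanced edge you pass may still be monochromatic. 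Intermediate value guarantees some edge with a given weight split, but not that this particular edge is bichromatic.

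The paper's fix is different and is the key idea you are missing: take any $2/3$-balanced chord $e$ from Chazelle's theorem. If $e$ happens to miss $B$ (so both contacts are on $R$, splitting $R$ into $R_1, R_2, R_3$ with $R_2$ alone on one side and $R_1,R_3,B$ on the other), then \emph{sweep} $e$ away from $R_2$ until it first touches $B$, yielding $e^*$. During this sweep the prefix and suffix of $R$ only shrink ($R_1' \subseteq R_1$, $R_3' \subseteq R_3$) and $B$ remains entirely on one side until the final moment, so the non-separated pairs after cutting at $e^*$ are still contained in the side of $e$ that already had weight $\le 2(n+m)/3$. That monotone-shrinking observation is what makes balance survive the move to a bichromatic chord; your path argument does not supply an analogue.
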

    \begin{proof}
        If $R$ and $B$ are not already separated by a horizontal chord, then naturally there exists a horizontal chord that intersects both $R$ and $B$.
        For the remainder, assume that $R$ and $B$ are not separated by a horizontal chord.
        
        For the existence of the particular chord we are after, we assume for simplicity that the endpoints of $R$ and $B$ are vertices of $P$, and that these are distinct.
        We make use of the relaxed polygon-cutting theorem of Chazelle~\cite{chazelle82polygon_cutting}.
        To use this theorem, we assign the vertices of $P$ that are also vertices of $R \cup B$ a weight of $1$, and all other vertices a weight of $0$.
        Let $C(P)$ denote the total weight of the vertices of $P$, and note that $C(P) = n+m \geq 3$.
        From Chazelle's relaxed polygon-cutting theorem we obtain the existence of a horizontal chord that splits $P$ into subpolygons $P_1$ and $P_2$ with $C(P_1) \leq C(P_2) \leq 2C(P) / 3$.
        Furthermore, we can compute such a chord $e$ in $\bigO(k)$ time: compute the horizontal decomposition of $P$ with the algorithm of Chazelle~\cite{chazelle91triangulation} and consider the faces in topological order.

        Assuming $P$ is in general position, the chord intersects one curve in at most two points and the other in at most one point.
        Suppose without loss of generality that $e$ intersects $R$ in two points, splitting it into the subcurves $R_1 = R[0, x]$, $R_2 = R[x, x']$ and $R_3 = R[x', 1]$.
        
        If $e$ intersects $B$ in a point $B(y)$, then it splits $B$ into the subcurves $B_1 = B[0, y]$ and $B_2 = B[y, 1]$.
        The prefixes $R_1$ and $B_1$ are not separated by $e$, but they lie in the same subpolygon of $P$.
        Thus $|R_1|+|B_1| \leq \max\{C(P_1), C(P_2)\} \leq 2(n+m)/3$.
        Symmetrically, the suffixes $R_3$ and $B_2$ are not separated by $e$, but satisfy $|R_3|+|B_2| \leq 2(n+m)/3$.
        The other four pairs of subcurves are separated by $e$.

        If $e$ does not intersect $B$, then the endpoints of $e$ are $R(x)$ and $R(x')$.
        We compute a suitable bichromatic chord by sweeping the chord $e$ in the direction away from $R[x, x']$, stretching and shrinking the chord as necessary to remain a chord of $P$.
        We do so until $e$ hits a point on $B$, either with an endpoint of $e$ or internal point.
        Let $e^*$ be this chord.

        The chord $e^*$ splits $R$ into subcurves where the prefix and suffix are strict subcurves of $R_1$ and $R_3$, respectively.
        Because the original chord $e$ does not intersect $B$, we have that $e$ splits $B$ in a trivial way.
        Thus, $B$ lies completely on one side of $e$, and therefore also on one side of $e^*$.
        The number of vertices of prefixes and suffixes induced by the split of $e^*$, on either side of $e^*$, are therefore bounded by those for $e$, and thus by $2(n+m)/3$.

        To compute $e^*$, during the sweep we ensure that the chord always connects a point on $R_1$ and a point on $R_3$.
        This gives a clear choice for how to advance the chord when it reaches a polygon vertex with its interior.
        With the horizontal decomposition of $P$, constructable in $\bigO(k)$ time~\cite{chazelle91triangulation}, this sweepline procedure computes $e^*$ in $\bigO(k)$ time.
    \end{proof}

    With the existence and construction result for good chords at hand, we make an initial partition of the parameter space.
    The partition is not yet a doubly-separated partition, but we refine it afterwards into one that is.

    We recursively partition the parameter space as follows.
    If $|R| \leq 2$ and $|B| \leq 2$ then we stop partitioning further.
    Otherwise, we compute a horizontal chord $e^*$ of $P$ with the algorithm of~\cref{lem:good_chord}, taking $\bigO(k)$ time.

    Assuming $P$ is in general position, the chord intersects one curve in at most two points and the other in at most one point.
    Suppose that $e^*$ intersects $R$ in two points, splitting it into the subcurves $R_1 = R[0, x]$, $R_2 = R[x, x']$ and $R_3 = R[x', 1]$, and that $e^*$ intersects $B$ in one point, splitting it into the subcurves $B_1 = B[0, y]$ and $B_2 = B[y, 1]$.
    These five subcurves together define six axis-parallel rectangles in the parameter space, which is the partition corresponding to $e^*$.

    Let $\calP$ be this partition.
    The subcurve $R_2$ is bounded by $e^*$, and thus is separated from all of $B$ by $e^*$.
    Further, the subcurve $R_1$ is separated from $B_2$, and the subcurve $R_3$ from $B_1$.
    Thus the only regions in the partition whose corresponding subcurves are not separated by a horizontal chord are the bottom-left region $[0, x] \times [0, y]$ and the top-right region $[x', 1] \times [y, 1]$; we recursively partition $\calP$ in these regions.

    We analyze the partition in the same manner as we do for doubly-separated partitions.
    That is, we refer to the cost of an axis-aligned rectangle $\calR = [x, x'] \times [y, y']$ as $\Cost(\calR) = \big| R[x, x'] \big| + \big| B[y, y'] \big|$, the same as for doubly-separated regions, and refer to the cost of the partition $\calP$ as $\Cost(\calP) = \sum_{\calR \in \calP} \Cost(\calR)$.

    \begin{lemma}
    \label{lem:initial_partition}
        The partition $\calP$ has cost $\bigO((n+m) \log nm)$ and can be computed in $\bigO(k \log nm)$ time.
    \end{lemma}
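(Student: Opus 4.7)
The plan is to analyze both $\Cost(\calP)$ and the construction time by setting up recurrences on the recursion depth, using Lemma \ref{lem:good_chord} as the main tool.

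For the cost, summing $|R_i|+|B_j|$ over the (up to) six region pairs at the current recursion level yields $2\sum_i |R_i| + 3\sum_j |B_j| = 2n+3m+\bigO(1) = \bigO(n+m)$. Lemma \ref{lem:good_chord} ensures that each of the two recursive calls---on $(R_1,B_1)$ and $(R_3,B_2)$---has subcurves of total size at most $2(n+m)/3$, and the combined size across both calls is at most $n+m+\bigO(1)$, since the middle subcurve $R_2$ appears only in the non-recursive regions. Writing $C(n,m)$ for the partition cost yields
\[
C(n,m) \;\leq\; C(n_1,m_1) + C(n_2,m_2) + \bigO(n+m),
\]
with $n_i+m_i \leq 2(n+m)/3$ and $n_1+m_1+n_2+m_2 \leq n+m+\bigO(1)$. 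A standard analysis, essentially the same as in Lemma \ref{lem:partition_x-monotone}, solves this to $C(n,m) = \bigO((n+m)\log nm)$.

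For the running time, the crucial observation is that each recursive call operates on a subpolygon of $P$: the subcurves $R_1,B_1$ (respectively $R_3,B_2$) lie on the boundary of one of the two interior-disjoint subpolygons of $P$ obtained by cutting along $e^*$. Hence the recursion descends through a family of interior-disjoint subpolygons whose total complexity at any fixed depth $d$ is at most $k + \bigO(2^d)$, since each chord cut introduces only $\bigO(1)$ vertices. Lemma \ref{lem:good_chord} then costs $\bigO(k')$ on a subpolygon of complexity $k'$, so the total chord-finding work summed over the subpolygons at depth $d$ is $\bigO(k+2^d)$. Summing over the $\bigO(\log nm)$ levels yields $\bigO(k\log nm)$ total chord-finding time; the per-level $\bigO(n+m)$ cost of splitting subcurves contributes only lower-order terms, absorbed into the stated bound.

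The main subtlety I expect to grapple with is confirming that the recursion truly stays within these subpolygons after each chord cut: I need $R_1$ and $B_1$ to lie on the boundary of a common subpolygon of $P\setminus e^*$, and likewise for $R_3, B_2$, so that Lemma \ref{lem:good_chord} can be applied to the smaller subpolygon rather than to all of $P$ at every recursive step. This follows from the opposite orientations of $R$ and $B$ together with the structure of how $e^*$ intersects the two curves (two intersections with one curve, one with the other), but making this rigorous is the only step that requires care beyond standard recurrence bookkeeping.
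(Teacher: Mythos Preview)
Your cost analysis is correct and matches the paper's: the recurrence $C(n,m)\le C(n_1,m_1)+C(n_2,m_2)+\bigO(n+m)$ with $n_i+m_i\le 2(n+m)/3$ and $n_1+m_1+n_2+m_2\le n+m+\bigO(1)$ solves to $\bigO((n+m)\log nm)$. Your identification of the key subtlety---that $R_1,B_1$ and $R_3,B_2$ each lie in one of the two subpolygons cut out by $e^*$, so the recursion can restrict to the smaller polygon---is exactly what the paper uses.

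However, your running-time analysis has a genuine gap. You bound the total subpolygon complexity at depth $d$ by $k+\bigO(2^d)$ and then sum over the $\bigO(\log nm)$ levels. But the recursion depth is $D=\Theta(\log_{3/2}(n+m))$, so $2^D=(n+m)^{\log_{3/2}2}\approx (n+m)^{1.71}$, and $\sum_{d\le D}2^d$ is of the same order. This term is \emph{not} absorbed into $\bigO(k\log nm)$; your level-by-level accounting only yields $\bigO(k\log nm + (n+m)^{1.71})$. The $2^d$ bound on the number of cuts made before depth $d$ is correct but far too loose.

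The paper avoids this by writing the time as a three-parameter recurrence
\[
T(k,n,m)\le T(k_1,n_1,m_1)+T(k_2,n_2,m_2)+\bigO(k+n+m),
\]
with the crucial additional constraint $k_1+k_2\le k+\bigO(1)$ coming from the fact that the two subpolygons partition $P$ along a single chord. Combined with $n_1+n_2\le n+\bigO(1)$, $m_1+m_2\le m+\bigO(1)$, and $n_i+m_i\le 2(n+m)/3+\bigO(1)$, this recurrence solves directly to $\bigO((k+n+m)\log nm)$: at every level the \emph{sum} of the $k$-values grows only by an additive constant per split, and the depth is $\bigO(\log nm)$. Your level-by-level argument can be repaired along the same lines, but you must track that the total number of cuts across all levels is controlled by the $(n+m)$ side of the recurrence, not merely by $2^d$.
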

    \begin{proof}
        Let $C(n, m)$ denote the cost of $\calP$ when $R$ and $B$ have $n$ and $m$ vertices, respectively.
        That is, $\Cost(\calP) = C(n, m)$.
        This quantity follows the following recurrence:
        \[
            C(n, m) \leq \begin{cases}
                n + m & \text{if $n \leq 2$ and $m \leq 2$,}\\
                C(n_1, m_1) + C(n_2, m_2) + \bigO(n+m) & \text{otherwise,}
            \end{cases}
        \]
        where $n_1+m_1 \leq 2(n+m)/3+2$ and $n_2+m_2 \leq 2(n+m)/3+2$, as well as $n_1+n_2 \leq n+2$ and $m_1+m_2 \leq m+2$.
        This recurrence solves to $\Cost(\calP) = C(n, m) = \bigO((n+m) \log nm)$.

        The chord $e^*$ is computed in $\bigO(k)$ time (\cref{lem:good_chord}).
        Given $e^*$, the corresponding partition can be computed in $\bigO(n+m)$ time by scanning over the curves.
        This gives a running time of $\bigO(k+n+m)$ per recursive step.
        Naively, since the resulting partition $\calP$ has $\bigO(n+m)$ regions, the total time taken to compute $\calP$ is therefore $\bigO((n+m) \cdot (k+n+m))$.
        Note, however, that when recursively partitioning the region $[0, x] \times [0, y]$, the curves $R_1 = R[0, x]$ and $B_1 = B[0, y]$ lie on the same side of $e^*$.
        That is, $e^*$ splits $P$ into two subpolygons $P_1$ and $P_2$, and $R_1$ and $B_1$ lie in the same subpolygon, say $P_1$.
        Thus, we may restrict ourselves to $P_1$ when computing chords for the further partitioning.
        The same goes for partitioning the region $[x', 1] \times [y', 1]$, though using $P_2$ instead.
        We therefore analyze the time taken to compute $\calP$ through a recurrence.

        Let $T(k, n, m)$ denote the time taken to compute an doubly-separated partition with the above procedure, when $P$ has $k$ vertices and $R$ and $B$ have $n$ and $m$ vertices, respectively.
        That is, the time taken to compute $\calP$ is $T(k, n, m)$.
        This quantity follows the following recurrence:
        \[
            T(k, n, m) \leq \begin{cases}
                \bigO(1) & \text{if $n \leq 2$ and $m \leq 2$,} \\
                T(k_1, n_1, m_1) + T(k_2, n_2, m_2) + \bigO(k+n+m) & \text{otherwise,}
            \end{cases}
        \]
        where $k_1 + k_2 \leq k+4$, as well as $n_1+m_1 \leq 2(n+m)/3+2$ and $n_2+m_2 \leq 2(n+m)/3+2$, and also $n_1+n_2 \leq n+2$ and $m_1+m_2 \leq m+2$.
        This recurrence solves to $T(k, n, m) = \bigO((k+n+m) \log nm) = \bigO(k \log nm)$.
    \end{proof}

    We refine $\calP$ into a doubly-separated partition.
    Some regions in $\calP$ already correspond to line segments on $R$ and $B$, and so are trivial.
    The other regions correspond to subcurves of $R$ and $B$ that are separated by a horizontal chord of $P$.
    We partition these regions further using the result of~\cref{cor:partition_separated_polygon}.

    \begin{lemma}
    \label{lem:partition_polygon_to_x-monotone}
        Let $P$ be a simple polygon with $k$ vertices.
        Let $R$ and $B$ be two simple, interior-disjoint curves on with $n$ and $m$ vertices on the boundary of $P$.
        We can compute a doubly-separated partition of the parameter space of $R$ and $B$ that has cost $\bigO((n+m) \log^2 nm)$ in $\bigO(k \log nm + (n+m) \log^2 nm)$ time.
        Additionally, in the same time bound, we can compute, for each region $[x, x'] \times [y, y']$ in the partition, a pair of doubly-separated $x$-monotone curves $\bar{R}$ and $\bar{B}$ with $\F_\delta(R[x, x'], B[y, y']) = \F_\delta(\bar{R}, \bar{B})$ for all $\delta$.
        The curves $\bar{R}$ and $\bar{B}$ have $\bigO(\big| R[x, x'] \big|)$ and $\bigO(\big| B[y, y'] \big|)$ vertices, respectively.
    \end{lemma}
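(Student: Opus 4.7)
I would simply compose Lemma~\ref{lem:initial_partition} with Corollary~\ref{cor:partition_separated_polygon}. Step one is to invoke Lemma~\ref{lem:initial_partition} to produce, in $\bigO(k \log nm)$ time, the partition $\calP$ of cost $\bigO((n+m) \log nm)$. Before doing anything else, I would verify that every non-trivial region of $\calP$ satisfies the hypothesis of Corollary~\ref{cor:partition_separated_polygon}, i.e., that its two subcurves are separated by a horizontal line segment in $P$. This should follow directly from the recursive construction of $\calP$: the four ``separated'' regions produced at each split are separated by the chord $e^*$ chosen at that step, and the base-case regions of the recursion are either trivial or correspond to subcurves already separated by a horizontal chord. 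Every such chord is a horizontal segment in a subpolygon of $P$, and hence in $P$ itself.

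Step two is to preprocess $P$ once in $\bigO(k)$ time for the data structures underlying Lemma~\ref{lem:distance_function_edge}, and then, for each non-trivial region $\calR = [x, x'] \times [y, y']$ of $\calP$, call Corollary~\ref{cor:partition_separated_polygon} on $(R[x, x'], B[y, y'])$. This refines $\calR$ into a doubly-separated subpartition of cost and time $\bigO((n_\calR + m_\calR) \log nm)$, where $n_\calR = |R[x, x']|$ and $m_\calR = |B[y, y']|$. Inside each such call, Lemma~\ref{lem:transformation_to_x-monotone} also yields, for every doubly-separated subregion, the desired pair of $x$-monotone curves $\bar R, \bar B$ with matching $\delta$-free space and with $\bigO(n_\calR)$ and $\bigO(m_\calR)$ vertices respectively; trivial regions of $\calP$ are left alone, with any canonical $x$-monotone representation of the corresponding line segments.

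Step three is the summation. Summing over all non-trivial regions of $\calP$,
\[
    \sum_{\calR} \bigO\big((n_\calR + m_\calR) \log nm\big) \;\leq\; \bigO(\log nm) \cdot \Cost(\calP) \;=\; \bigO((n+m) \log^2 nm),
\]
which bounds both the total cost of the refined partition and the total refinement time. Adding the $\bigO(k \log nm)$ cost of step one yields the claimed $\bigO(k \log nm + (n+m) \log^2 nm)$ running time. I do not expect any substantial obstacle; the only thing to be careful about is the structural hypothesis check in step one, and making sure the single up-front preprocessing of $P$ suffices for all the Corollary~\ref{cor:partition_separated_polygon} calls, which it does because the oracles of Bae--Wang and Guibas--Hershberger depend only on $P$ and not on the particular chord used in each region.
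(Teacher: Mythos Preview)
Your proposal is correct and follows essentially the same approach as the paper: first build the initial partition $\calP$ via \cref{lem:initial_partition}, then refine each non-trivial (chord-separated) region using the transformation to vertically-separated $x$-monotone curves (\cref{lem:transformation_to_x-monotone}) followed by the partition of \cref{lem:partition_x-monotone}, and sum the per-region costs to get the claimed bounds. The only slight looseness is your vertex-count claim ``$\bigO(n_\calR)$ and $\bigO(m_\calR)$'' for the curves of each \emph{sub}region; the lemma asserts the tighter $\bigO(|R[x,x']|)$ and $\bigO(|B[y,y']|)$, which you indeed get because the transformation of \cref{lem:transformation_to_x-monotone} is edge-by-edge, so the subcurve of $\bar R$ over a subregion has complexity proportional to the corresponding subcurve of $R$.
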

    \begin{proof}
        Let $\calR = [x, x'] \times [y, y'] \in \calP$ be a region where $R[x, x']$ and $B[y, y']$ are separated by a horizontal chord.
        We partition this region further through the connection of $R[x, x']$ and $B[y, y']$ with vertically-separated $x$-monotone curves.
        For this, we compute a pair of vertically-separated $x$-monotone curves $\bar{R}$ and $\bar{B}$ with $\F_\delta(R[x, x'], B[y, y']) = \F_\delta(\bar{R}, \bar{B})$.
        These curves have $\bigO(\big| R[x, x'] \big|)$ and $\bigO(\big| B[y, y'] \big|)$ vertices, respectively, and can be computed in $\bigO(\Cost(\calR) \log k)$ time, after preprocessing $P$ in $\bigO(k)$ time~(\cref{lem:transformation_to_x-monotone}).

        We compute a doubly-separated partition $\bar{\calP}$ of the parameter space of $\bar{R}$ and $\bar{B}$ that has cost $\bigO(\Cost(\calR) \log \Cost(\calR))$.
        This partition can be computed in $\bigO(\Cost(\calR) \log \Cost(\calR))$ time~(\cref{lem:partition_x-monotone}).
        The partition $\bar{\calP}$ induces a doubly-separated partition of $\calR$ with the same asymptotic cost.

        By partitioning all regions $\calR = [x, x'] \times [y, y'] \in \calP$ where $R[x, x']$ and $B[y, y']$ are separated by a horizontal chord in the above manner, we obtain the desired partition of the parameter space of $R$ and $B$.
        Let $\calP^*$ be the resulting refined partition.
        Its cost is
        \[
            \Cost(\calP^*) = \sum_{\calR \in \calP} \bigO(\Cost(\calR) \log \Cost(\calR)) = \bigO(\Cost(\calP) \log nm) = \bigO((n+m) \log^2 nm).
        \]
        The time taken to compute the initial partition $\calP$ is $\bigO(k \log nm)$~(\cref{lem:initial_partition}).
        The time taken to compute its refinement $\calP^*$, together with the doubly-separated $x$-monotone curves associated with the respective regions, is therefore
        \[
            \bigO(k \log nm) + \sum_{\calR \in \calP} \bigO(\Cost(\calR) \log nm) = \bigO(k \log nm + (n+m) \log^2 nm).\qedhere
        \]
    \end{proof}

\section{Computing the Fr\'echet distance}
\label{sec:frechet_distance}

    In this section, we discuss the main problem setting of this work.
    Let $P$ be a simple polygon and let $R$ and $B$ be two simple, interior-disjoint curves on the boundary of $P$.
    We assume $R$ is oriented clockwise with respect to $P$, and $B$ counter-clockwise.
    We present a near-linear time algorithm for computing $\dF(R, B)$.

    Our algorithm makes use of a \emph{decision} algorithm that, given a parameter $\delta \geq 0$, reports whether $\dF(R, B) \leq \delta$ or $\dF(R, B) > \delta$.
    Recall from the preliminaries that Alt and Godau~\cite{alt95continuous_frechet} showed that $\dF(R, B) \leq \delta$ if and only if there exists a bimonotone path in $\F_\delta(R, B)$ from $(0, 0)$ to $(1, 1)$.
    We develop an algorithm that decides whether such a path exists.
    
    In~\cref{sub:decision_algorithm}, we use the doubly-separated partition constructed in~\cref{sec:partition} to efficiently propagate reachability information through the various regions.
    Recall that there are two types of regions in a doubly-separated partition: trivial regions and doubly-separated regions.
    For trivial regions, we give a simple algorithm for propagating reachability information through them, using that these regions correspond to pairs of segments on $R$ and $B$.
    For the other regions, we use the result of~\cref{sec:doubly_separated} instead, to also propagate reachability information through them.
    This leads to our decision algorithm.
    In~\cref{sub:optimization_algorithm}, we turn our decision algorithm into an optimization algorithm.

\subsection{The decision algorithm}
\label{sub:decision_algorithm}

    We present an algorithm for deciding whether $\dF(R, B) \leq \delta$ for a given parameter $\delta \geq 0$.
    Our algorithm decides whether a bimonotone path in $\F_\delta(R, B)$ exists from $(0, 0)$ to $(1, 1)$.
    Let $\calP^*$ be a doubly-separated partition of the parameter space of $R$ and $B$ computed with the algorithm of~\cref{lem:partition_polygon_to_x-monotone}.
    We propagate reachability information through each region in $\calP^*$ separately.

    Before we do so, we need to address one issue.
    For the regions $\calR \in \calP^*$ where we have a pair of doubly-separated $x$-monotone curves $\bar{R}$ and $\bar{B}$ whose free space is identical to that inside $\calR$, we wish to apply the result of~\cref{lem:propagating_reachability_doubly_separated}.
    However, this algorithm is for propagating reachability from a given discrete set of points to a given discrete set of points.
    Hence we first construct sets $S_\calR$ and $T_\calR$ to which we apply the result of~\cref{lem:propagating_reachability_doubly_separated}.

    We define the sets $S_\calR$ and $T_\calR$.
    For this, we say that a point $B(x)$ is \emph{locally closest} to $R(y)$ if an infinitesimal perturbation of $B(x)$ while staying on $B$ increases its distance to $R(y)$.
    We call points on $R$ locally closest to points on $B$ if a symmetric condition holds.
    The set $T_\calR$ is the set of all points $(x^*, y^*)$ on the top and right sides of $\calR$ for which at least one of $R(x^*)$ and $B(y^*)$ is a vertex of $R$ or $B$, or locally closest to the other point.

    Let $\calR = [x, x'] \times [y, y'] \in \calP^*$.
    The set $T_\calR$ is defined as follows.
    For each edge $e$ of $R$, we add the point $(x^*, y')$ to $T_\calR$ if $x^* \in [x, x']$ and $R(x^*)$ is either a vertex of $R$ or the point on $e$ closest to $B(y')$.
    Symmetrically, for each edge $e$ of $B$, we add the point $(x', y^*)$ to $T_\calR$ if $y^* \in [y, y']$ and $B(y^*)$ is either a vertex of $B$ or the point on $e$ closest to $R(x')$.
    
    The set $S_\calR$ is defined as all points in $S \cap \calR$, as well as all points on the bottom and left sides of $\calR$ that are in a set $T_{\calR'}$ (for an adjacent region $\calR'$) and are $\delta$-reachable from $S_{\calR'}$.
    The set $S_\calR$ is defined as all points on the bottom and left sides of $\calR$ that are in a set $T_{\calR'}$ (for an adjacent region $\calR'$) and are $\delta$-reachable from $S_{\calR'}$.
    If $\calR$ contains the point $(0, 0)$, we set $S_\calR = \{(0, 0)\}$ instead.

    The usefulness of these sets comes from~\cref{lem:locally_closest_matching}, from which it follows that matchings can be assumed to enter and leave a region $\calR$ through points in $S_\calR$ and $T_\calR$, respectively.

    \begin{lemma}
    \label{lem:locally_closest_matching}
        There exists a \f matching between $R$ and $B$ where for every matched pair of points $(r, b)$, at least one of $r$ and $b$ is a vertex or locally closest to the other point.
    \end{lemma}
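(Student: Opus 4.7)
The plan is to show that any \f matching can be deformed into one whose image in the parameter space $[0, 1]^2$ lies entirely on a closed \emph{critical skeleton}
\[
    K = \bigl\{ (x, y) : R(x) \text{ is a vertex, } B(y) \text{ is a vertex, or one is locally closest to the other} \bigr\}.
\]
Membership of a parameter pair in $K$ is precisely the condition required by the lemma, so such a deformation directly produces the matching in question. The strategy is to start with an arbitrary \f matching $(f, g)$ of cost $\delta = \dF(R, B)$, view it as a bimonotone path $\gamma$ in $\F_\delta(R, B)$, and reroute $\gamma$ through each connected open component of the complement $[0, 1]^2 \setminus K$ until its image lies in $K$.

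The first step is to analyze the distance function $d(x, y) = \overline{d}(R(x), B(y))$ on a connected open component $U$ of $[0, 1]^2 \setminus K$. The set $K$ is a finite union of vertical lines (vertex $x$-coordinates of $R$), horizontal lines (vertex $y$-coordinates of $B$), and one-dimensional loci where $d$ attains a local minimum along a horizontal or vertical slice. Inside $U$ both $R(x)$ and $B(y)$ lie on fixed edges, and by definition of $K$ the slice functions $d(\cdot, y)$ and $d(x, \cdot)$ have no local minima. A key substep is to argue that these slice functions also have no strict interior local maxima, using a property specific to $L_1$-geodesics in a simple polygon: the distance from a fixed point to a point moving along a line segment does not exhibit strict interior local maxima. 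Granted this, the one-sided derivatives $\partial_x d$ and $\partial_y d$ are nonzero throughout $U$, and continuity on the connected set $U$ forces each of their signs to be constant. Hence $d$ is monotone in each coordinate direction on $\bar U$.

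Next, I would reroute $\gamma$ out of each component $U$ that it visits. For any maximal subinterval $[t_0, t_1]$ with $\gamma([t_0, t_1]) \subseteq \bar U$, the endpoints $p = \gamma(t_0)$ and $q = \gamma(t_1)$ satisfy $p \preceq q$ componentwise and lie on $\partial U \subseteq K$. I would replace $\gamma|_{[t_0, t_1]}$ with a bimonotone path from $p$ to $q$ tracing $\partial U$; such a path exists because $\bar U$ lies in a single cell of the vertex grid and $\partial U$ is piecewise axis-parallel. By the monotonicity of $d$ on $\bar U$, the maximum of $d$ along the new path is at most $\max(d(p), d(q)) \le \delta$, so the resulting curve remains a valid $\delta$-matching. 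Iterating over the finitely many components of $[0, 1]^2 \setminus K$ visited by $\gamma$ produces a \f matching whose image lies entirely in $K$.

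The main obstacle I anticipate is the "no interior local maxima" property of the $L_1$-geodesic distance function restricted to an edge of $R$ or $B$, on which the sign-constancy argument in each component rests. I would tackle this by exploiting the rectilinear structure of $L_1$-geodesics: a shortest path from a fixed point $p \in P$ to a variable point $q$ on a polygon edge can be decomposed into an axis-monotone staircase, and tracking how this decomposition evolves as $q$ slides along the edge should rule out strict local maxima of the length function. Once this structural property is established, the rerouting step is routine and the iterative construction yields the required matching.
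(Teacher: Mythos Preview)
Your rerouting strategy is a genuinely different route from the paper's proof, but as written it contains a concrete error and an unjustified step that together leave a real gap.

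The error is the assertion that ``$\partial U$ is piecewise axis-parallel.'' Inside a vertex-grid cell the set $K$ also contains the locally-closest loci $\{\partial_x d = 0\}$ and $\{\partial_y d = 0\}$, and these are in general \emph{curves}, not axis-parallel segments. For instance, if $R$ and $B$ each contribute a horizontal edge with no polygon obstruction between them, then $d(x,y)=\lvert R(x)_1 - B(y)_1\rvert + c$ and the locus $\{R(x)\text{ locally closest to }B(y)\}$ is a diagonal in the cell. So your stated reason for the existence of a bimonotone boundary path collapses.

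The gap is the rerouting itself. Even granting coordinatewise monotonicity of $d$ on $U$, you still owe an argument that some arc of $\partial U$ from $p$ to $q$ is \emph{bimonotone}. The locus $x = x^*(y)$ of $x$-minimizers need not be monotone in $y$, so neither boundary arc of $U$ between $p$ and $q$ is a priori a monotone curve. Your cost bound ``$\max(d(p),d(q))$'' simultaneously uses bimonotonicity and containment in $\bar U$; taking an L-shaped path as a fallback does not help, since it may exit $\bar U$ into a region where $d$ has the opposite monotonicity. Finally, your proposed attack on the ``no interior local maxima'' obstacle is only a sketch; the clean statement you need is that $q\mapsto \overline d(p,q)$ is convex on any segment in $P$ (a known fact, also used implicitly in the paper's argument), not a vague staircase decomposition.

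For comparison, the paper avoids all of this parameter-space topology. It constructs the new matching directly: first re-match every vertex of $R$ (resp.\ $B$) to its nearest point on the opposing edge, then on each remaining pair of subsegments $\overline{rr'},\overline{bb'}$ pick the global minimizer $(r^*,b^*)$, match the endpoint to its nearest point, and match the rest of the subsegment by nearest-point projection onto the other. The cost bounds follow from unimodality of the edge-to-point distance (your convexity property), but no analysis of the shape of $K$ or of component boundaries is required. Your approach could plausibly be repaired---for example by proving that, within a cell, the locally-closest loci are monotone curves so the appropriate boundary arc of $U$ is bimonotone---but that is exactly the missing geometric step.
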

     \begin{proof}
        Let $(f, g)$ be a \f matching between $R$ and $B$.
        Based on $(f, g)$, we construct a new \f matching $(f', g')$ that satisfies the claim.

        Let $r_1, \dots, r_n$ be the vertices of $R$ and let $b_1, \dots, b_m$ be the vertices of $B$.
        For each $r_i$, if $(f, g)$ matches it to a point interior to an edge $\overline{b_j b_{j+1}}$ of $B$, or to the vertex $b_j$, then we let $(f', g')$ match $r_i$ to the point on $\overline{b_j b_{j+1}}$ closest to it.
        This point is either a vertex or locally closest to $r_i$.
        Symmetrically, for each $b_j$, if $(f, g)$ matches it to a point interior to an edge $\overline{r_i r_{i+1}}$ of $R$, or to the vertex $r_{i+1}$, then we let $(f', g')$ match $b_j$ to the point on $\overline{r_i r_{i+1}}$ closest to it.
        This point is either a vertex or locally closest to $b_j$.
        These matches are monotone, in that if a vertex $r_i$ matches to a point before vertex $b_j$, then $b_j$ matches to a point after $r_i$.

        Consider two maximal subsegments $\overline{r r'}$ and $\overline{b b'}$ of $R$ and $B$ where currently, $r$ is matched to $b$ and $r'$ is matched to $b'$.
        Let $\overline{r r'} \subseteq \overline{r_i r_{i+1}}$ and $\overline{b b'} \subseteq \overline{b_j b_{j+1}}$.
        We have $r = r_i$ or $b = b_j$, as well as $r' = r_{i+1}$ or $b' = b_{j+1}$.

        Let $r^* \in \overline{r r'}$ and $b^* \in \overline{b b'}$ minimize $\overline{d}(r^*, b^*)$.
        It is clear that $r^*$ and $b^*$ are both either vertices or locally closest to the other point.
        We let $(f', g')$ match $r^*$ to $b^*$.
        Since $(f, g)$ originally matched $r^*$ and $b^*$ to points on $\overline{r_i r_{i+1}}$ and $\overline{b_j b_{j+1}}$, respectively, we have $\overline{d}(r^*, b^*) \leq \dF(R, B)$.
        Next we define the part of $(f', g')$ that matches $\overline{r r^*}$ to $\overline{b b^*}$.

        Suppose $r = r_i$ and let $\hat{b} \in \overline{b b^*}$ be the point closest to $r$.
        We let $(f', g')$ match $r$ to $\overline{b \hat{b}}$, and match each point on $\overline{r r^*}$ to its closest point on $\overline{\hat{b} b^*}$.
        This is a proper matching, as the closest point on $\overline{b b'}$ moves continuously along the segment as we move continuously along $\overline{r r'}$.

        The point on $\overline{b \hat{b}}$ furthest from $r$ is $b$, and so the cost of matching $r$ to $\overline{b \hat{b}}$ is $\overline{d}(r, b) \leq \dF(R, B)$.
        For the cost of matching $\overline{r r^*}$ to $\overline{\hat{b} b^*}$, observe that the point on $\overline{\hat{b} b^*}$ closest to a point $\hat{r} \in \overline{r r^*}$ is also the point on $\overline{b_j b_{j+1}}$ closest to it.
        This is due to $\hat{b}$ being closest to $r$ among the points on $\overline{b_j b_{j+1}}$ and $b^*$ being locally closest to $r^*$, which means it is closest to $r^*$ among the points on $\overline{b_j b_{j+1}}$.
        Since $(f, g)$ matches $\overline{r r^*}$ to a subset of $\overline{b_j b_{j+1}}$, it follows that the cost of matching $\overline{r r^*}$ to $\overline{\hat{b} b^*}$ in the way $(f', g')$ does is at most $\dF(R, B)$.

        We define a symmetric matching of cost at most $\dF(R, B)$ between $\overline{r r^*}$ and $\overline{b b^*}$ when $b = b_j$.
        Also, we symmetrically define a matching of cost at most $\dF(R, B)$ between $\overline{r^* r'}$ and $\overline{b^* b'}$.
        The resulting matching $(f', g')$ satisfies the claim.
    \end{proof}

    Naturally, for our decision algorithm, we need to compute these sets efficiently:

    \begin{lemma}
        We can compute the set $T_\calR$ for all $\calR \in \calP^*$ in $\bigO(k + (n+m) \log^2 nm \log k)$ time altogether.
    \end{lemma}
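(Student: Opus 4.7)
Each $T_\calR$ consists of (i) the vertices of $R$ and $B$ that lie on the top and right sides of $\calR$, and (ii) for each edge of $R$ overlapping the top side (respectively each edge of $B$ overlapping the right side), the point of that edge closest, under $\overline{d}$, to the opposite curve's endpoint on that side. By \cref{lem:partition_polygon_to_x-monotone}, the total number of such (edge, region) incidences summed over all $\calR \in \calP^*$ equals $\Cost(\calP^*) = \bigO((n+m)\log^2 nm)$. The vertex contributions can be enumerated essentially for free by walking through the regions in order (using one binary search per region to locate the first relevant edge of $R$ and $B$), so the crux is a $\bigO(\log k)$-time subroutine that takes a query point $q \in P$ and a line segment $\overline{uv} \subseteq P$ and returns the point $w^* \in \overline{uv}$ minimizing $\overline{d}(q, w)$.

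I would build this subroutine by mimicking the case-1 pivot construction from the proof of \cref{lem:distance_function_edge}, with a single point $q$ now playing the role that $p = \NN(u) = \NN(v)$ played there. After preprocessing $P$ in $\bigO(k)$ time with the structures of Bae and Wang (for $\bigO(\log k)$ $L_1$-distance queries) and Guibas and Hershberger (for extracting Euclidean geodesics as balanced search trees in $\bigO(\log k)$ time), a query proceeds as follows: extract $\pi(q, u)$ and $\pi(q, v)$, let $s$ be the endpoint of their common prefix, and binary-search the tree representations for the maximal horizontal and vertical tangents of $\pi(s, u)$ and $\pi(s, v)$ inside the inward-convex subfunnel bounded by these two geodesics and $\overline{uv}$. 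Their intersection is a pivot $a \in P$ through which an $L_1$-geodesic from $q$ to every $w \in \overline{uv}$ can be routed, so $\overline{d}(q, w) = \overline{d}(q, a) + \lVert a - w \rVert_1$ on $\overline{uv}$, a piecewise-linear function of $w$ with $\bigO(1)$ pieces whose minimum is then immediate.

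Putting these together, the total work is $\bigO(k)$ for preprocessing plus $\bigO(\Cost(\calP^*) \cdot \log k) = \bigO((n+m)\log^2 nm \cdot \log k)$ for the queries (absorbing the region walk, since every region processed triggers at least one such query), which gives the claimed bound.

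The main obstacle is checking that the pivot construction really carries over from \cref{lem:distance_function_edge} when the ``source'' degenerates from a horizontal chord to a single point $q$ that itself lies on $\partial P$. In particular, one must verify that $\pi(q, u)$ and $\pi(q, v)$ are still inward-convex in the relevant subfunnel, that the maximal horizontal and vertical tangents of the side paths always exist and are locatable by two binary searches, and that the resulting $a$ really lies on an $L_1$-geodesic from $q$ to every $w \in \overline{uv}$ (so that the decomposition $\overline{d}(q, w) = \overline{d}(q, a) + \lVert a - w \rVert_1$ is tight, not just an upper bound). Once this geometric bookkeeping is in place, the time analysis is routine.
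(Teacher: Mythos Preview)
Your proposal is correct and follows essentially the same approach as the paper: both reduce the per-edge closest-point query to the machinery of \cref{lem:distance_function_edge}, yielding $\bigO(\log k)$ per query after $\bigO(k)$ preprocessing, and then sum over the $\Cost(\calP^*)=\bigO((n+m)\log^2 nm)$ edge--region incidences.

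The one difference is packaging. You rebuild the pivot construction from case~1 of \cref{lem:distance_function_edge} with the source $p$ replaced by an arbitrary point $q$, and you flag the geometric re-verification (inward-convexity of the funnel sides, existence of the two tangents, tightness of the decomposition through $a$) as the main obstacle. The paper sidesteps all of that bookkeeping: it simply observes that a single point $q$ \emph{is} a (degenerate) horizontal line segment, so \cref{lem:distance_function_edge} applies verbatim with $e^* = \{q\}$, automatically landing in the $\NN(u)=\NN(v)$ case and giving the piecewise-linear distance function along $\overline{uv}$ in $\bigO(\log k)$ time. This black-box invocation is shorter and avoids the re-verification you were worried about, but the underlying geometry is identical to what you describe.
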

    \begin{proof}
        By~\cref{lem:distance_function_edge}, the function $\varphi$, which measures the distance from points in~$P$ to a given horizontal line segment, is piecewise-linear with $\bigO(1)$ pieces when restricted to a line segment $\overline{uv}$.
        After preprocessing $P$ in $\bigO(k)$ time, we can, given the horizontal line segment and $\overline{uv}$, compute this restriction in $\bigO(\log k)$ time.
        With can use this to compute the point on an edge of $R$ (resp. $B$) that is locally closest to a point $b \in B$ (resp. $r \in R$) in $\bigO(\log k)$ time, by viewing the point as a horizontal line segment.
        Thus, the set $T_\calR$ for a region $\calR = [x, x'] \times [y, y'] \in \calP^*$ can be computed in $\bigO((\big| R[x, x'] \big| + \big| B[y, y'] \big|) \log k) = \bigO(\Cost(\calR) \log k)$ time.
        For all regions in $\calP^*$ together, this takes $\bigO(\Cost(\calP^*) \log k) = \bigO((n+m) \log^2 nm \log k)$ time.
    \end{proof}

    We are now ready to give our decision algorithm, which decides whether a bimonotone path in $\F_\delta(R, B)$ from $(0, 0)$ to $(1, 1)$ exists, for a given $\delta \geq 0$.
    First, we compute the doubly-separated partition $\calP^*$ of~\cref{lem:partition_polygon_to_x-monotone}, taking $\bigO(k \log nm + (n+m) \log^2 nm)$ time.
    The partition has cost $\bigO((n+m) \log^2 nm)$, and for each region $[x, x'] \times [y, y']$ in this partition, either $R[x, x']$ and $B[y, y']$ are line segments, or we have additionally computed a pair of doubly-separated $x$-monotone curves $\bar{R}$ and $\bar{B}$ with $\bigO(\big| R[x, x'] \big|)$ and $\bigO(\big| B[y, y'] \big|)$ vertices, respectively.
    We compute the sets $T_\calR$ for all regions $\calR \in \calP^*$, taking $\bigO(k + (n+m) \log^2 nm \log k)$ time.
    The goal is then to compute, for each set $T_\calR$, the subset of points that are $\delta$-reachable from $(0, 0)$.

    Consider a region $\calR = [x, x'] \times [y, y'] \in \calP^*$ and suppose that for each region $\calR'$ incident to the bottom or left side of $\calR$ we have already computed the subset of $T_{\calR'}$ of points that are $\delta$-reachable from $(0, 0)$.
    The union of these subsets forms the set $S_\calR$ that serves as the set of ``entrances'' for $\calR$; a point in $T_\calR$ is $\delta$-reachable from $(0, 0)$ if and only if it is $\delta$-reachable from a point in $S_\calR$.
    
    If $R[x, x']$ or $B[y, y']$ has more than one edge, we have available a pair of doubly-separated $x$-monotone curves $\bar{R}$ and $\bar{B}$ with $\F_\delta(R[x, x'], B[y, y']) = \F_\delta(\bar{R}, \bar{B})$.
    We apply the algorithm of~\cref{lem:propagating_reachability_doubly_separated} to these curves, using the set $S_\calR$ for the ``entrances'' and the set $T_\calR$ for the ``potential exits.''
    This algorithm reports all points in $T_\calR$ that are $\delta$-reachable from at least one point in $S_\calR$ in $\bigO((|\bar{R}| + |\bar{B}|) \log |\bar{R}| |\bar{B}|) = \bigO(\Cost(\calR) \log \Cost(\calR))$ time.

    If both $R[x, x']$ and $B[y, y']$ are line segments, then we can check in constant time whether a given point in $T_\calR$ is reachable from a given point in $S_\calR$, so we can report the set of points in $T_\calR$ that are $\delta$-reachable from at least one point in $S_\calR$ in $\bigO(|S_\calR| \cdot |T_\calR|) = \bigO(1)$ time.

    \begin{lemma}
    \label{lem:free_space_two_segments}
        If both $R[x, x']$ and $B[y, y']$ are line segments, then a point $t \in T_\calR$ is $\delta$-reachable from a point $s \in S_\calR$ if and only if $t \in \F_\delta(R[x, x']$, $B[y, y'])$ and $t$ lies above and to the right of $s$.
    \end{lemma}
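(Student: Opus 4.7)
The forward direction is immediate: a bimonotone path $\gamma$ in $\F_\delta$ from $s$ to $t$ ends at $t$, so $t \in \F_\delta$, and its bimonotonicity places $t$ above and to the right of $s$.

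For the converse, suppose $t \in \F_\delta(R[x,x'], B[y,y'])$ and $t$ lies above-right of $s$. My plan is to show that the straight line segment $\overline{st}$ in parameter space is itself a valid bimonotone $\delta$-matching. Bimonotonicity of $\overline{st}$ is immediate from the relative positions of $s$ and $t$. For free-space containment, write $s = (\alpha_s, \beta_s)$, $t = (\alpha_t, \beta_t)$, and an arbitrary intermediate point as $(\alpha, \beta) = (1-\lambda) s + \lambda t$ with $\lambda \in [0,1]$. Because $R[x,x']$ and $B[y,y']$ are line segments, $R$ and $B$ are affine on the relevant parameter intervals, so that $R(\alpha) = (1-\lambda) R(\alpha_s) + \lambda R(\alpha_t)$ and $B(\beta) = (1-\lambda) B(\beta_s) + \lambda B(\beta_t)$. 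Thus the containment reduces to the convexity-type inequality
\[
    \overline{d}(R(\alpha), B(\beta)) \leq (1-\lambda)\, \overline{d}(R(\alpha_s), B(\beta_s)) + \lambda\, \overline{d}(R(\alpha_t), B(\beta_t)) \leq \delta,
\]
whose right-hand bound follows from $s, t \in \F_\delta$.

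To establish the inequality, my plan is to construct an explicit path from $R(\alpha)$ to $B(\beta)$ inside $P$ whose $L_1$-length matches the middle expression. Reparameterize the two $L_1$-geodesics $\pi_s = \pi(R(\alpha_s), B(\beta_s))$ and $\pi_t = \pi(R(\alpha_t), B(\beta_t))$ by a common $[0,1]$-arclength coordinate, and take their pointwise convex combination $(1-\lambda)\pi_s + \lambda \pi_t$. This curve connects $R(\alpha)$ to $B(\beta)$, and its $L_1$-length equals the claimed convex combination by translation-invariance and sublinearity of the $L_1$-norm. The main obstacle is verifying that the combined curve stays inside $P$; for this I would use that both $\pi_s$ and $\pi_t$ lie in $P$, that the endpoints of each geodesic slide along single segments of $\partial P$, and that together $\pi_s$, $\pi_t$, and the corresponding subsegments of $\partial P$ bound a simply connected subregion of $P$ through which the combined curve passes. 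As a sanity check, in the case where $\calR$ inherits the doubly-separated $x$-monotone reduction of~\cref{lem:transformation_to_x-monotone}, the same inequality follows directly from $\overline{d}(R(\alpha), B(\beta)) = \bar{B}(\beta) - \bar{R}(\alpha)$ with $\bar{R}$ concave and $\bar{B}$ convex on the respective line-segment domains.
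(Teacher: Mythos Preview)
Your approach differs from the paper's. You aim to show the free-space cell is convex, so that the segment $\overline{st}$ itself is a $\delta$-matching; the paper instead constructs a non-straight bimonotone matching through the pair $(r^*,b^*)$ minimizing $\overline{d}$ over $R[x_s,x_t]\times B[y_s,y_t]$, using closest-point projections on each side of $(r^*,b^*)$, and thereby sidesteps the convexity question entirely.

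Your argument has a genuine gap precisely at the step you flag as the main obstacle. The convex combination $(1-\lambda)\pi_s + \lambda\pi_t$ need not stay in $P$ under an arclength parameterization. Take $P$ to be an L-shape with a single reflex vertex $v$, and let $R[x,x']$ and $B[y,y']$ be boundary edges in the two arms so that both $\pi_s$ and $\pi_t$ bend at $v$. Because the first legs of $\pi_s$ and $\pi_t$ have different lengths, the two paths reach $v$ at different parameters $u_s \neq u_t$; for any $u$ strictly between them, one of $\pi_s(u),\pi_t(u)$ still lies in the first arm while the other is already in the second, and their convex combination lands in the excluded corner outside $P$. Your observation that the hourglass region is simply connected does not prevent this. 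One could attempt a repair by synchronizing the parameterizations at $v$, but when $\pi_s$ and $\pi_t$ touch different sets of reflex vertices no single synchronization is available, so this route needs substantially more work than what you sketch.

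Your sanity check is also only partial: trivial regions produced at the base of the recursion in \cref{sub:partition_general_case} are not accompanied by any separating chord, so the reduction of \cref{lem:transformation_to_x-monotone} is unavailable there; and even where it is available, the asserted concavity of $\bar R$ and convexity of $\bar B$ on a single edge of $R$ and $B$ is not established anywhere in the paper. The convexity of the $L_1$-geodesic free-space cell is plausible (its $L_2$ analogue is the key lemma of Cook and Wenk), but proving it would require a different argument, for instance tracking how the anchor vertices of $\pi(R(\alpha),B(\beta))$ evolve and verifying that one-sided derivatives agree at every transition.
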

    \begin{proof}
        Let $s = (x_s, y_s) \in S_\calR$ and $t = (x_t, y_t) \in T_\calR$ with $x_t \geq x_s$ and $y_t \geq y_s$.
        We construct a $\delta$-matching between $R[x_s, x_t]$ and $B[y_s, y_t]$.
        To do so, let $r^* \in R[x_s, x_t]$ and $b^* \in B[y_s, y_t]$ minimize $\overline{d}(r^*, b^*)$.
        Naturally, $\overline{d}(r^*, b^*) \leq \overline{d}(R(x_s), B(y_s)) \leq \delta$.
        We construct a $\delta$-matching between $\overline{R(x_s) r^*}$ and $\overline{B(y_s) b^*}$; a $\delta$-matching for the other two subsegments can be constructed symmetrically.

        Let $\hat{b} \in \overline{R(x_s) r^*}$ be the point closest to $R(x_s)$.
        The cost of matching $R(x_s)$ to $\overline{B(y_s) \hat{b}}$ is equal to $\overline{d}(R(x_s), B(y_s)) \leq \delta$.
        From here, we match every point on $\overline{R(x_s) r^*}$ to their closest point on $\overline{B(y_s) b^*}$.
        This is a proper matching, as the closest point varies continuously from $B(y_s)$ to $b^*$ as we vary a point from $R(x_s)$ to $r^*$.
        The cost of matching these pairs of points decreases continuously, thus this part of the matching has cost $\overline{d}(R(x_s) \hat{b}) \leq \delta$.
        The complete matching is therefore a $\delta$-matching.
    \end{proof}

    We now have an algorithm that, given the sets $S_\calR$ and $T_\calR$, reports the points in $T_\calR$ that are $\delta$-reachable in $\bigO(\Cost(\calR) \log \Cost(\calR))$ time.
    Applying this algorithm to all regions in $\calP^*$, in some topological order, we eventually decide whether $(1, 1)$ is $\delta$-reachable from $(0, 0)$, since it is included in a set $T_\calR$.
    This takes
    \[
        \sum_{\calR \in \calP^*} \bigO(\Cost(\calR) \log \Cost(\calR)) = \bigO(\Cost(\calP^*) \log nm) = \bigO((n+m) \log^3 nm)
    \]
    time.
    The computations of $\calP^*$ and the sets $T_\calR$, which took $\bigO(k \log nm + (n+m) \log^2 nm \log k)$ time in total, dominate the running time.
    However, these computations are independent of the decision parameter $\delta$, and so we view these computations as preprocessing.
    This gives the following result:

    \begin{lemma}
    \label{lem:decision_algorithm}
        Let $P$ be a polygon with $k$ vertices.
        Let $R$ and $B$ be two simple, interior disjoint curves on the boundary of $P$, with $n$ and $m$ vertices.
        After preprocessing $P$, $R$ and $B$ in $\bigO(k \log nm + (n+m) \log^2 nm \log k)$ time, we can decide, given $\delta \geq 0$, whether $\dF(R, B) \leq \delta$ in $\bigO((n+m) \log^3 nm)$ time.
    \end{lemma}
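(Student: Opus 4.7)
The plan is to assemble the decision algorithm from the ingredients built in the preceding sections, separating everything that does not depend on $\delta$ into a preprocessing phase. First I would invoke \cref{lem:partition_polygon_to_x-monotone} to compute, in $\bigO(k \log nm + (n+m) \log^2 nm)$ time, a doubly-separated partition $\calP^*$ of total cost $\bigO((n+m) \log^2 nm)$, together with, for each doubly-separated region $\calR$, a pair of doubly-separated $x$-monotone curves $\bar{R}_\calR, \bar{B}_\calR$ whose free space agrees with the one inside $\calR$. Then I would compute the locally-closest-point sets $T_\calR$ on the top and right sides of every $\calR$ using the preceding unnumbered lemma in $\bigO(k + (n+m) \log^2 nm \log k)$ time. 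These two steps constitute the preprocessing and together match the claimed preprocessing bound.

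For the query phase on input $\delta$, I would process the regions of $\calP^*$ in a topological order compatible with componentwise $\leq$ on the rectangles, maintaining for each processed region $\calR$ the subset of $T_\calR$ that is $\delta$-reachable from $(0,0)$. The region containing $(0,0)$ is initialized with $S_\calR = \{(0,0)\}$; every other $S_\calR$ is obtained by taking the union, over regions $\calR'$ meeting the bottom or left side of $\calR$, of the already-computed reachable subsets of $T_{\calR'}$ on the shared boundary. To extract the reachable subset of $T_\calR$ I would split into two cases: if both $R[x,x']$ and $B[y,y']$ are single segments, \cref{lem:free_space_two_segments} lets me test each pair $(s, t) \in S_\calR \times T_\calR$ in $\bigO(1)$ time; otherwise I would feed $(\bar{R}_\calR, \bar{B}_\calR, S_\calR, T_\calR, \delta)$ into the subroutine of \cref{lem:propagating_reachability_doubly_separated}, at a cost of $\bigO(\Cost(\calR) \log \Cost(\calR))$. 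Correctness rests on \cref{lem:locally_closest_matching}: any witnessing \f matching can be assumed to only match vertices and locally-closest points, so the bimonotone path in $\F_\delta(R, B)$ it traces only needs to cross the boundary between adjacent regions at points of the relevant $T_{\calR'}$; hence the discrete sets $S_\calR$ lose no reachability information.

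The total query cost telescopes as
\[
    \sum_{\calR \in \calP^*} \bigO\bigl(\Cost(\calR) \log \Cost(\calR)\bigr) \leq \bigO(\Cost(\calP^*) \log nm) = \bigO((n+m) \log^3 nm),
\]
yielding the claimed query time. The hard part, I expect, is not the per-region subroutine but the bookkeeping around the entry sets: I need to verify that $|S_\calR|$ stays within the $\bigO(\Cost(\calR))$ budget expected by \cref{lem:propagating_reachability_doubly_separated} (which follows because each boundary side of $\calR$ is shared with regions whose $T_{\calR'}$ sets have size proportional to the common side's subcurve complexity), that a valid topological ordering on the axis-aligned rectangles of $\calP^*$ exists, and that the discretization argument via \cref{lem:locally_closest_matching} remains valid when reachability is pieced together across the shared sides of adjacent regions.
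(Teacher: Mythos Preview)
Your proposal is correct and follows essentially the same approach as the paper: precompute $\calP^*$ via \cref{lem:partition_polygon_to_x-monotone} and the sets $T_\calR$ via the unnumbered lemma as preprocessing, then for each $\delta$ sweep the regions in topological order, propagating reachability with \cref{lem:free_space_two_segments} in trivial cells and \cref{lem:propagating_reachability_doubly_separated} in doubly-separated ones, and sum $\bigO(\Cost(\calR)\log\Cost(\calR))$ to get $\bigO((n+m)\log^3 nm)$. The bookkeeping concerns you flag (bounding $|S_\calR|$, existence of a topological order, and that \cref{lem:locally_closest_matching} justifies the discretization across region boundaries) are exactly the points the paper leaves implicit, and your sketched resolutions are the intended ones.
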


\subsection{Obtaining an optimization algorithm}
\label{sub:optimization_algorithm}

    In this section, we turn the decision algorithm of the previous section into an algorithm that computes $\dF(R, B)$, at the cost of a logarithmic factor in the running time when compared to the decision algorithm.
    For this, we derive a polynomial number of candidate distances, one of which is the actual \f distance $\dF(R, B)$.
    Specifically, for two sets of real numbers $X$ and $Y$, their \emph{Cartesian sum}, denoted $X \oplus Y$, is the multiset $X \oplus Y=\{\!\{x+y\mid x\in X, y\in Y\}\!\}$.
    We will show that we can compute two sets $X$ and $Y$ of $\bigO((n+m) \log^2 nm)$ real numbers each, such that $\dF(R,B)\in X \oplus Y$.
    After computing $X$ and $Y$ in near-linear time, we can use existing techniques~\cite{Frederickson84selection_XY,mirzaian85selection_XY} for selection in Cartesian sums to binary search over the Cartesian sum $X \oplus Y$ without explicitly computing its $|X|\cdot|Y|$ many elements.

    We again make use of the doubly-separated partition $\calP^*$ computed with the algorithm of~\cref{lem:partition_polygon_to_x-monotone}.
    We use this partition to identify pairs of separated one-dimensional curves that together describe (almost) all pairwise distances between points on $R$ and points on $B$.
    The sets $X$ and $Y$ will be the set of values of the vertices of these curves.

    First, recall the sets $T_\calR$, for regions $\calR \in \calP^*$, and recall that there exists a \f matching between $R$ and $B$ that corresponds to a bimonotone path in the parameter space that, for each region $\calR \in \calP^*$ that it intersects, goes through a point in $T_\calR$.
    Thus, there exists a pair of points $s \in T_{\calR'}$ and $t \in T_\calR$, for a region $\calR'$ and region $\calR$ incident to the top or right side of $\calR'$, such that the minimum parameter $\delta$ for which there exists a $\delta$-matching from $s$ to $t$ is the \f distance.

    Take regions $\calR'$ and $\calR$, with $\calR$ incident to the top or right side of $\calR'$.
    Let $s \in T_{\calR'}$ and let $t \in T_\calR$ be above and to the right of $s$.
    Let $\delta^*$ be the minimum parameter for which there exists a $\delta^*$-matching from $s$ to $t$.
    We construct sets $X_\calR$ and $Y_\calR$ such that $\delta^*\in X_\calR \oplus Y_\calR$.

    If the subcurves corresponding to $\calR$ both are line segments, then from~\cref{lem:free_space_two_segments} we obtain that $t$ is $\delta$-reachable from $s$, for some $\delta \geq 0$, if and only if both points lie in $\F_\delta(R, B)$.
    We set $X_\calR$ to be the set minimum values $\delta$ for which points $t \in T_\calR$ lie in $\F_\delta(R, B)$.
    The set $Y_\calR$ is simply set to $\{0\}$.
    Taken over all regions in $\calP^*$, the sets have a total cardinality of $\bigO(\Cost(\calP^*)) = \bigO((n+m) \log^2 nm)$.
    
    If one of the subcurves corresponding to $\calR$ has more than one edge, then we have access to a pair of doubly-separated $x$-monotone curves whose $\delta$-free space is identical to the $\delta$-free space of $R$ and $B$ inside $\calR$.
    Moreover, by~\cref{lem:transform_to_separated_1D}, we can transform these curves into a pair of curves in $\R$ that are separated by the point $0$, such that the $\delta$-free space remains the same.
    Let $\bar{R}$ and $\bar{B}$ be these curves.
    Alt and Godau~\cite{alt95continuous_frechet} observe that $\delta^*$ must be one of the following values:
    
    \begin{itemize}
        \item the minimum value for which $s$ and $t$ lie in $\F_{\delta^*}(\bar{R}, \bar{B})$, or

        \item the distance between a vertex of $\bar{R}$ (resp. $\bar{B}$) and an edge of $\bar{B}$ (resp. $\bar{R}$), or

        \item the distance between two vertices of $\bar{R}$ (resp. $\bar{B}$) to an edge of $\bar{B}$ (resp. $\bar{R}$), if the vertices lie at equal distance to the edge.
    \end{itemize}

    Since $\bar{R}$ and $\bar{B}$ are curves in $\R$ that are separated by a point, the candidates for $\delta^*$ of second and third type coincide.
    Moreover, the distance from a vertex $p$ of one curve to an edge $e$ of the other curve is also the distance from $p$ to the endpoint of $e$ closest to $p$, which must be a vertex.
    Thus, the latter two types of candidates for $\delta^*$ can be summarized as all pairwise distances between vertices of $\bar{R}$ and vertices of $\bar{B}$.
    Exploiting the separation of $\bar{R}$ and $\bar{B}$ further, the pairwise distances between vertices of $\bar{R}$ and vertices of $\bar{B}$ can be represented by the Cartesian sum of two sets $\bar{X}$ and $\bar{Y}$, containing the absolute values of the vertex values of $\bar{R}$ and $\bar{B}$, respectively.
    We set $X_\calR \gets \bar{X}$ and $Y_\calR \gets \bar{Y}$.
    Taken over all regions in $\calP^*$, the sets have a total cardinality of $\bigO(\Cost(\calP^*)) = \bigO((n+m) \log^2 nm)$.

    Let $X = \bigcup_{\calR \in \calP^*} X_\calR$ and $Y = \bigcup_{\calR \in \calP^*} Y_\calR$.
    These sets have a total cardinality of $\bigO((n+m) \log^2 nm)$, and contain values $x^* \in X$ and $y^* \in Y$ such that $\dF(R, B) = x^*+y^*$.
    Next we search over $X \oplus Y$ to find the exact value of $\dF(R, B)$.

    After sorting $X$ and $Y$, we can compute the $i^{\mathrm{th}}$ smallest value in $X \oplus Y$, for any given $i$, in $\bigO(|X|+|Y|) = \bigO((n+m) \log^2 nm)$ time~\cite{Frederickson84selection_XY,mirzaian85selection_XY}.
    There are $|X| \cdot |Y|$ values in $X \oplus Y$.
    We binary search over the integers $1, \dots, |X| \cdot |Y|$, and at each considered integer $i$, we compute the $i^{\mathrm{th}}$ smallest value $\delta$ in $X \oplus Y$.
    Then we use our decision algorithm to decide whether $\dF(R, B) \leq \delta$ and to guide the search to the value $\delta^* \in X \oplus Y$ with $\delta^* = \dF(R, B)$.
    
    In the above search procedure, each step takes $\bigO((n+m) \log^3 nm)$ time after preprocessing $P$ and the curves $R$ and $B$ for decision queries, which takes $\bigO(k \log nm + (n+m) \log^2 nm \log k)$ time~(\cref{lem:decision_algorithm}).
    We perform $\bigO(\log (|X| \cdot |Y|)) = \bigO(\log nm)$ steps.
    Thus we obtain our main result:

    \begin{theorem}
        Let $P$ be a polygon with $k$ vertices.
        Let $R$ and $B$ be two simple, interior disjoint curves on the boundary of $P$, with $n$ and $m$ vertices.
        We can compute $\dF(R, B)$ in $\bigO(k \log nm + (n+m) (\log^2 nm \log k + \log^4 nm))$ time.
    \end{theorem}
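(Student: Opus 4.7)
The plan is to convert the decision algorithm of \cref{lem:decision_algorithm} into an exact optimisation algorithm by parametric search over an implicitly represented set of $\tilde{\bigO}((n+m)^2)$ candidate distances, one of which must equal $\dF(R,B)$. To avoid materialising this set, I would encode the candidates as a Cartesian sum $X\oplus Y$ of two sorted sets of size $\bigO((n+m)\log^2 nm)$ and use the linear-time $X+Y$ selection algorithms of Frederickson--Johnson and Mirzaian--Arjomandi to extract a median-rank element of $X\oplus Y$ in each binary-search round in $\bigO(|X|+|Y|)$ time, never writing out the full list of $|X|\cdot|Y|$ sums.

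The key step is the construction of $X$ and $Y$. I would reuse the doubly-separated partition $\calP^*$ of \cref{lem:partition_polygon_to_x-monotone} together with its sets $T_\calR$ and the one-dimensional curves associated with each non-trivial region. By \cref{lem:locally_closest_matching}, some \f matching can be assumed to traverse, for each region it visits, a point of $T_\calR$, so $\dF(R,B)$ equals the minimum $\delta$ for which a point $t\in T_\calR$ is $\delta$-reachable from a point $s\in T_{\calR'}$ inside a region $\calR$ incident to the top or right side of $\calR'$. For trivial regions \cref{lem:free_space_two_segments} collapses this to the minimum $\delta$ placing $t$ in the $\delta$-free space, which I put in $X_\calR$ while setting $Y_\calR=\{0\}$. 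For non-trivial regions I transform via \cref{lem:transform_to_separated_1D} to one-dimensional curves $\bar R,\bar B$ separated by the origin, and invoke Alt--Godau: the critical $\delta$ is either a reachability value or a vertex-to-edge distance, and in one dimension the latter reduces to a vertex-to-vertex distance of the form $|\bar R(\cdot)|+|\bar B(\cdot)|$. Taking $X_\calR$ and $Y_\calR$ to be the absolute vertex values of $\bar R$ and $\bar B$, the unions $X=\bigcup_{\calR}X_\calR$ and $Y=\bigcup_{\calR}Y_\calR$ then satisfy $\dF(R,B)\in X\oplus Y$ with $|X|+|Y|=\bigO(\Cost(\calP^*))=\bigO((n+m)\log^2 nm)$.

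The remainder is bookkeeping. After preprocessing for decision queries in $\bigO(k\log nm+(n+m)\log^2 nm\log k)$ time and sorting $X$ and $Y$, I run $\bigO(\log(|X|\cdot|Y|))=\bigO(\log nm)$ rounds of binary search; each round selects one element from $X\oplus Y$ in $\bigO((n+m)\log^2 nm)$ time and calls the decision algorithm in $\bigO((n+m)\log^3 nm)$ time. Summing yields the claimed $\bigO(k\log nm+(n+m)(\log^2 nm\log k+\log^4 nm))$ bound.

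The hard part, I expect, will be establishing $\dF(R,B)\in X\oplus Y$ cleanly. It requires combining the per-region structure of an extremal matching from \cref{lem:locally_closest_matching} and \cref{lem:free_space_two_segments} with the Alt--Godau list of critical values, and then verifying that after the transform of \cref{lem:transform_to_separated_1D} every such critical value, being a vertex-to-vertex distance across the separator in one dimension, factors as the sum of an absolute value from $\bar R$'s side and one from $\bar B$'s side. Once that inclusion is in place, the selection and decision subroutines compose mechanically to the stated running time.
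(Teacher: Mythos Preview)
Your proposal is correct and follows essentially the same approach as the paper: build the sets $X$ and $Y$ from the doubly-separated partition $\calP^*$ and its associated one-dimensional curves, represent the Alt--Godau critical values as the Cartesian sum $X\oplus Y$, and binary search over $X\oplus Y$ using linear-time selection together with the decision algorithm of \cref{lem:decision_algorithm}. The paper additionally spells out that in one dimension the second and third Alt--Godau critical-value types coincide and both reduce to vertex-to-vertex distances, which is exactly the factoring step you flag as the crux.
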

    
\bibliographystyle{plainurl}
\bibliography{bibliography}

\end{document}